\newtheorem{prop}{Proposition}
\newtheorem{lem}[prop]{Lemma}
\newtheorem{thm}[prop]{Theorem}
\newtheorem{cor}[prop]{Corollary}
\newtheorem{defi}[prop]{Definition}
\newtheorem{ex}[prop]{Example}
\newtheorem{rem}[prop]{Remark}
\newcommand{\F}{\mathbb{F}}
\newcommand{\N}{\mathbb{N}}
\newcommand{\rk}{\mathrm{rank}}
\newcommand{\rs}{\mathrm{rs}}
\newcommand{\lm}{\mathrm{lm}}
\newcommand{\lpos}{\mathrm{lpos}}
\newcommand{\qdeg}{\mathrm{qdeg}}
\newcommand{\Lp}{\mathcal{L}_q(x,q^m)}
\title{Gabidulin Decoding via Minimal Bases of Linearized Polynomial Modules}
\date{}
\author[1]{Anna-Lena Horlemann-Trautmann\thanks{ALHT was partially supported by Swiss National Science Foundation Fellowship no.\ 147304.}}
\author[2]{Margreta Kuijper}
\affil[1]{Laboratory of Algorithmic Mathematics, EPF Lausanne, Switzerland}
\affil[2]{Department of Electrical and Electronic Engineering, University of Melbourne, Australia}
\begin{document}
\maketitle

\begin{abstract}
We show how Gabidulin codes can be decoded via parametrization by using interpolation modules over the ring of linearized polynomials with composition. Our decoding algorithm computes a list of message words that correspond to all closest codewords to a given received word. This involves the computation of a minimal basis for the interpolation module that corresponds to the received word, followed by a search through the parametrization for valid message words.
Our module-theoretic approach strengthens the link between Gabidulin decoding and Reed-Solomon decoding.   
Two subalgorithms are presented to compute the minimal basis, one iterative, the other an extended Euclidean algorithm. Both of these subalgorithms have polynomial time complexity. The complexity order of the overall algorithm, using the parametrization, is then compared to straightforward exhaustive search as well as to chase list decoding. 
\end{abstract}

\section{Introduction}

Over the last decade there has been increased interest in Gabidulin codes, mainly because of their relevance to network coding~\cite{ko08,si08j}. Gabidulin codes are optimal rank-metric codes over a field $\F_{q^m}$ (where $q$ is a prime power). They were first derived by Gabidulin in \cite{ga85a} and independently by Delsarte in \cite{de78}. 
These codes can be seen as the $q$-analog of Reed-Solomon codes, using $q$-linearized polynomials instead of arbitrary polynomials. They are optimal in the sense that they are not only MDS codes with respect to the Hamming metric, but also achieve the Singleton bound with respect to the rank metric and are thus MRD codes. They are not only of interest in network coding but also in space-time coding \cite{lu03}, crisscross error correction \cite{ro91} and distributed storage \cite{si12}. 

The decoding of Gabidulin codes has obtained a fair amount of attention in the literature, starting with work on decoding within the unique decoding radius in~\cite{ga85a,ga92} and more recently~\cite{lo06,ri04p,si10,si11,si09p,wa13d}. If $n$ is the length of the Gabidulin code and $k$ denotes the dimension of the code as a linear space over the field $\F_{q^m}$, the unique decoding radius is given by $(n-k)/2$. 
 Decoding beyond the unique decoding radius was addressed in e.g.\ \cite{
 lo06p,ma12,si10p,wa13a,wa13phd,wa13p}. In this case, one speaks of \emph{list-decoding}, i.e.\ finding all  codewords within a given radius to the received word. A main open question is whether Gabidulin codes can be list decoded efficiently. This paper seeks to contribute to current research efforts on this open question.
In \cite{wa13a} it was shown that, beyond the Johnson radius $n-\sqrt{kn}$, list decoding with a polynomial size list of codewords is not possible. 
This raises the question up to which radius list decoding with a polynomial list size \emph{is} possible.
Recent results \cite{gu13,gu12a} show an explicit construction of rank-metric codes, constructed as subcodes of Gabidulin codes, that can be list-decoded in polynomial time up to a certain radius beyond the unique decoding radius. This motivates further research of what happens in the original Gabidulin setting between the unique and the Johnson radius.

A closely related family of codes is the one of lifted Gabidulin codes \cite{si08j}. These codes are sets of vector spaces
 and can be used for non-coherent (also called random) network coding \cite{ko08}. Unique decoding of lifted Gabidulin codes was investigated in e.g.\ \cite{ko08,si08j}, whereas list-decoding of these codes was studied in \cite{tr13p,xi11}.

Using the close resemblance between Reed-Solomon codes and Gabidulin codes, the paper~\cite{lo06} translates Gabidulin decoding into a set of polynomial interpolation conditions. Essentially, this setup is also used in the papers~\cite{ko08,xi11} that present iterative algorithms that perform Gabidulin list decoding with a list size of 1. In this paper we present an iterative algorithm that bears similarity to the ones in~\cite{ko08,lo06,xi11} but yields {\em all} closest codewords rather than just one. 
The latter is due to our parametrization approach. This approach enables us to strengthen the link between Gabidulin decoding and Reed-Solomon decoding. For the latter a parametrization approach was developed in~\cite{kuijppol_it}. 

The paper is structured as follows. In the next section we present several preliminaries on $q$-linearized polynomials, Gabidulin codes, the rank metric and we recall the polynomial interpolation conditions from~\cite{lo06}. We also detail an iterative construction of the $q$-annihilator polynomial and the $q$-Lagrange polynomial. Section \ref{sec:modules} deals with modules over the ring of linearized polynomials and gives the Predictable Leading Monomial property for minimal bases of these modules. In Section \ref{sec:decoding} we reformulate the Gabidulin list decoding requirements in terms of a module represented by four $q$-linearized polynomials and present the decoding algorithm, which is based on a parametrization using the Predictable Leading Monomial property. For this we present two subalgorithms for computing a minimal basis of the interpolation module. Furthermore, we analyze the complexity of our algorithms. We conclude this paper in Section \ref{sec:conclusion}.

Preliminary short versions of this paper are conference papers~\cite{ku14p} and~\cite{ku14}.


\section{Preliminaries}\label{sec:prelim}

\subsection{$q$-linearized polynomials}

Let $q$ be a prime power and let $\F_q$ denote the finite field with $q$ elements. It is well-known that there exists a primitive element $\alpha$ of the extension field $\F_{q^m}$, such that $\F_{q^m}\cong \F_q[\alpha] $. Moreover, $\F_{q^m}$  is isomorphic (as a vector space) to the vector space $\F_q^m$. 
One then easily gets the isomorphic description of matrices over the base field $\F_q$ as vectors over the extension field, i.e.\ $\F_q^{m\times n}\cong \F_{q^m}^n$. Since we will work with matrices over different underlying fields we denote the rank of a matrix $X$ over $\F_q$ by $\rk_q(X)$.

For some vector $(v_1,\dots, v_n) \in \F_{q^m}^n$ we denote the $k \times n$ \emph{Moore matrix} by
\begin{equation}
M_k(v_1,\dots, v_n) := \left( \begin{array}{cccc}  v_1 & v_2 &\dots &v_n \\ v_1^{[1]} & v_2^{[1]} &\dots &v_n^{[1]} \\ &&\vdots \\  v_1^{[k-1]} & v_2^{[k-1]} &\dots &v_n^{[k-1]} \end{array}\right)   ,\label{eq_moore}
\end{equation}
where $[i]:= q^i$. A \emph{$q$-linearized polynomial} over $\F_{q^m}$ is defined to be of the form
\[f(x) = \sum_{i=0}^{n} a_i x^{[i]}   \quad, \quad a_i \in\F_{q^m} , \]
where $n$ is called the \emph{$q$-degree} of $f(x)$, assuming that $a_n\neq 0$, denoted by $\qdeg (f)$. This class of polynomials was first studied by Ore in \cite{or33}. 
One can easily check that $f(x_1 + x_2)= f(x_1)+f(x_2)$ and $f(\lambda x_1) = \lambda f(x_1)$ for any $x_1,x_2 \in \F_{q^m}$ and $\lambda \in \F_q$, hence the name \emph{linearized}. The set of all $q$-linearized polynomials over $\F_{q^m}$ is denoted by $\Lp$. This set forms a non-commutative ring with the normal addition $+$ and composition $\circ$ of polynomials. 
Because of the non-commutativity, products and quotients of elements of $\Lp$ have to be specified as being ``left" or ``right" products or quotients. To not be mistaken with the standard division, we call the inverse of the composition \emph{symbolic division}. I.e.\ $f(x)$ is symbolically divisible by $g(x)$ with right quotient $m(x)$ if $$ g(x) \circ m(x) = g(m(x)) = f(x).$$
Efficient algorithms for all these operations (left and right symbolic multiplication and division) exist and can be found e.g.\ in \cite{ko08}.

\begin{lem}[cf.\ \cite{li97b} Thm. 3.50]\label{lem:rootspace}
Let $f(x) \in \Lp$ and $\F_{q^s}$ be the smallest extension field of $\F_{q^m}$ that contains all roots of $f(x)$. Then the set of all roots of $f(x)$ forms a $\F_q$-linear vector space in $\F_{q^s}$.
\end{lem}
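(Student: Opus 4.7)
The plan is to verify directly that the root set $V := \{\beta \in \F_{q^s} \mid f(\beta) = 0\}$ is an $\F_q$-subspace of $\F_{q^s}$ by checking closure under addition and scalar multiplication.

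First I would note that the $\F_q$-linearity identities $f(x_1+x_2) = f(x_1) + f(x_2)$ and $f(\lambda x_1) = \lambda f(x_1)$ stated in the excerpt (for $x_1,x_2\in\F_{q^m}$, $\lambda\in\F_q$) actually hold for all $x_1,x_2$ in any extension field of $\F_q$. This is because $q$ is a power of the characteristic, so the Frobenius-type map $x\mapsto x^{[i]} = x^{q^i}$ is additive on any field of that characteristic, and it fixes $\F_q$ pointwise since $\lambda^q = \lambda$ for $\lambda\in\F_q$. Summing the monomials $a_i x^{[i]}$ of $f$ then yields the two linearity identities over $\F_{q^s}$, which is the only step where one must be careful about the enlarged field.

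With linearity available over $\F_{q^s}$, the remaining verification is immediate. Clearly $0\in V$ since $f(0)=0$. If $\alpha,\beta\in V$ then $f(\alpha+\beta) = f(\alpha) + f(\beta) = 0$, so $\alpha+\beta\in V$. If $\alpha\in V$ and $\lambda\in\F_q$ then $f(\lambda\alpha) = \lambda f(\alpha) = 0$, so $\lambda\alpha\in V$. Thus $V$ is an $\F_q$-linear subspace of $\F_{q^s}$, as claimed.

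I do not expect any real obstacle here; the only point requiring a sentence of justification is the extension of the linearity identities from $\F_{q^m}$ to $\F_{q^s}$, after which the proof reduces to the observation that $V$ is the kernel of an $\F_q$-linear map $\F_{q^s}\to\F_{q^s}$.
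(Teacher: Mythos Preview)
Your proof is correct. The paper does not actually give its own proof of this lemma; it simply cites \cite{li97b}, Theorem~3.50. Your argument is the standard one: once the $\F_q$-linearity identities are seen to hold over $\F_{q^s}$ (via the Frobenius), the root set is just the kernel of an $\F_q$-linear map and hence a subspace. There is nothing to add.
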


\begin{defi}
Let $U$ be a $\F_q$-linear subspace of $\F_{q^m}$. We call $\Pi_U(x):=\prod_{g \in U} (x-g)$ the \emph{$q$-annihilator polynomial of $U$}.
\end{defi}

\begin{lem}[\cite{li97b} Thm. 3.52]\label{lem:nullpoly}
Let $U$ be a $\F_q$-linear subspace of $\F_{q^m}$. Then $\Pi_U(x)$ is an element of $\Lp$.
\end{lem}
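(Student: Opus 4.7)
My plan is to prove the statement by induction on $d=\dim_{\F_q} U$.

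For the base case $d=0$, we have $U=\{0\}$ and $\Pi_U(x)=x$, which lies in $\Lp$ trivially. For the inductive step, suppose the claim holds for all subspaces of dimension $d$, and let $U$ have dimension $d+1$. Choose a codimension-one subspace $U'\subset U$ and an element $\beta\in U\setminus U'$, so that $U$ decomposes as the disjoint union
\[
U \;=\; \bigcup_{\lambda\in\F_q}\bigl(U'+\lambda\beta\bigr).
\]
This decomposition is the key structural ingredient and rewrites the defining product as
\[
\Pi_U(x)\;=\;\prod_{\lambda\in\F_q}\prod_{g'\in U'}\bigl(x-(g'+\lambda\beta)\bigr)\;=\;\prod_{\lambda\in\F_q}\Pi_{U'}(x-\lambda\beta).
\]

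By the induction hypothesis $\Pi_{U'}\in\Lp$, and since $\lambda\in\F_q$ acts as an $\F_q$-scalar, the linearity recorded just after the definition of linearized polynomials gives $\Pi_{U'}(x-\lambda\beta)=\Pi_{U'}(x)-\lambda\Pi_{U'}(\beta)$. Writing $y=\Pi_{U'}(x)$ and $c=\Pi_{U'}(\beta)\in\F_{q^m}$, the task reduces to evaluating $\prod_{\lambda\in\F_q}(y-\lambda c)$. If $c=0$ this equals $y^q$; if $c\neq 0$, substituting $z=y/c$ and using the standard identity $\prod_{\lambda\in\F_q}(z-\lambda)=z^q-z$ yields
\[
\prod_{\lambda\in\F_q}(y-\lambda c)\;=\;y^q-c^{q-1}y.
\]
Hence $\Pi_U(x)=\Pi_{U'}(x)^q - \Pi_{U'}(\beta)^{q-1}\,\Pi_{U'}(x)$.

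Since raising a linearized polynomial $\sum_i a_ix^{[i]}$ to the $q$-th power gives $\sum_i a_i^{\,q} x^{[i+1]}$, which is again linearized, and since $\F_{q^m}$-scalar multiples and sums of linearized polynomials are linearized, we conclude $\Pi_U\in\Lp$. The only non-routine step is identifying the coset decomposition together with the evaluation of $\prod_{\lambda\in\F_q}(y-\lambda c)$; everything else is bookkeeping. A small subtlety to be careful about is to note that $c=\Pi_{U'}(\beta)$ is nonzero precisely when $\beta\notin U'$, so the case split above is really just the observation that both subcases yield linearized expressions.
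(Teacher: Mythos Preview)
Your proof is correct. The paper does not actually supply a proof of this lemma; it merely cites the result from Lidl--Niederreiter (Theorem~3.52), so there is no in-paper argument to compare against. That said, your induction is the standard textbook proof, and the recursion you derive,
\[
\Pi_U(x)\;=\;\Pi_{U'}(x)^q - \Pi_{U'}(\beta)^{q-1}\,\Pi_{U'}(x),
\]
is precisely the update rule the paper later uses in Proposition~\ref{prop:Lagrec} to construct the $q$-annihilator polynomial iteratively, so your argument dovetails nicely with what follows.
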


Note that, if $g_1,\dots,g_n$ is a basis of $U$, one can rewrite 
$$ \Pi_U(x) = \lambda \det(M_{n+1}(g_1,\dots,g_n,x))$$ 
for some constant $\lambda\in\F_{q^m}$. 
Clearly the $q$-degree of this polynomial equals $n$.
%
%
%
%
We also have a notion of $q$-Lagrange polynomial:
\begin{defi}\label{defi:Lagrange}
Let $\mathbf{g} =(g_1,\dots,g_n) \in \F_{q^m}^n$, where $g_1 , g_2 , \ldots , g_n$ are $\F_q$-linearly independent. Let $\mathbf r=(r_1,\dots,r_n) \in \F_{q^m}^n$. Define the matrix $\mathfrak{D}_i(\mathbf g, x)$ as $  M_{n}(g_1,\dots,g_n,x)$ without the $i$-th column.
 We define the \emph{$q$-Lagrange polynomial} corresponding to $\mathbf{g} $ and $\mathbf{r} $ as  
\[\Lambda_{\mathbf g, \mathbf r}(x) := \sum_{i=1}^n (-1)^{n-i}  r_i \frac{\det(\mathfrak{D}_i(\mathbf g, x))}{\det (M_n(\mathbf g))} \quad \in \F_{q^m}[x] .\]
\end{defi}
It can be easily verified that the above polynomial is $q$-linearized and that $\Lambda_{\mathbf g, \mathbf{r}}(g_i) = r_i $ for $i=1,\dots,n$.

Note that, although not under the same name, the previous two polynomials were also defined in e.g.\ \cite{wa13phd}.


In the following we will use matrix composition, which is defined analogously to matrix multiplication:
$$\left[\begin{array}{cc}  a(x) & b(x) \\ c(x) & d(x) \end{array}\right] \circ\left[\begin{array}{cc}  e(x) & f(x) \\ g(x) & h(x) \end{array}\right]:= $$ $$\left[\begin{array}{cc}  a(e(x)) + b(g(x)) & a(f(x))+ b(h(x)) \\ c(e(x))+ d(g(x)) & c(f(x))+ d(h(x)) \end{array}\right] .$$ 

We can recursively construct the $q$-annihilator and the $q$-Lagrange polynomial as follows. Below we use the standard notation $\langle g_1,\dots,g_n\rangle$ for the $\F_q$-linear span of $g_1, g_2, \ldots g_n$.

\begin{prop}\label{prop:Lagrec}
Let $g_1,\dots,g_n\in\F_{q^m}$ be linearly independent over $\F_q$ and let $r_1,\dots,r_n\in\F_{q^m}$. Define
$$\Pi_1(x):=  x^q-g_1^{q-1} x \quad , \quad\Lambda_{ 1}(x):=\frac{r_1}{g_1} x,$$
and for $i=1, \ldots , n-1$
$$ \left[ \begin{array}{cc} \Pi_{i+1}(x) \\ \Lambda_{i+1}(x)  \end{array}\right] :=  \left[ \begin{array}{cc} x^q- \Pi_i(g_{i+1})^{q-1}x & 0 \\ -  \frac{ \Lambda_{ {i}}(g_{i+1}) - r_{i+1}}{\Pi_{i} (g_{i+1})} x  & x  \end{array}\right] \circ  \left[ \begin{array}{cc} \Pi_{i}(x)  \\ \Lambda_{i}(x)   \end{array}\right]  .$$
Then we have $\Pi_i(x) = \Pi_{\langle g_1, g_2, \ldots , g_i \rangle }(x)$ and $\Lambda_i(x) = \Lambda_{(g_1, g_2, \ldots , g_i),(r_1,r_2,\dots, r_{i})}(x)$ for $i=1, \ldots , n$.
\end{prop}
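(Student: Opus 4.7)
The plan is induction on $i$, treating both claims in parallel. The base case $i=1$ is a direct check: expanding $\prod_{\lambda\in\F_q}(x-\lambda g_1)=g_1^q\prod_{\lambda\in\F_q}\bigl((x/g_1)-\lambda\bigr)=x^q-g_1^{q-1}x$ shows $\Pi_1=\Pi_{\langle g_1\rangle}$, and $\Lambda_1(g_1)=r_1$ is immediate.

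For the inductive step, I would first unfold the matrix composition to the explicit scalar recurrences
\begin{align*}
\Pi_{i+1}(x) &= \Pi_i(x)^q - \Pi_i(g_{i+1})^{q-1}\,\Pi_i(x),\\
\Lambda_{i+1}(x) &= \Lambda_i(x) - \frac{\Lambda_i(g_{i+1})-r_{i+1}}{\Pi_i(g_{i+1})}\,\Pi_i(x).
\end{align*}
Well-posedness requires $\Pi_i(g_{i+1})\neq 0$, which follows because by induction the root set of $\Pi_i$ is exactly $\langle g_1,\ldots,g_i\rangle$ (Lemmas~\ref{lem:rootspace} and~\ref{lem:nullpoly}) while $g_{i+1}$ lies outside this span by hypothesis. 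For $\Pi_{i+1}$, I would observe that it is $q$-linearized (as a composition of $q$-linearized polynomials) and monic of $q$-degree $i+1$, since the leading term $x^{q^i}$ of $\Pi_i$ becomes $x^{q^{i+1}}$ after raising to the $q$-th power, while the correction term has strictly smaller degree. It vanishes at $g_1,\ldots,g_{i+1}$: for $j\le i$ both summands vanish because $\Pi_i(g_j)=0$, and at $g_{i+1}$ a direct substitution gives $\Pi_i(g_{i+1})^q-\Pi_i(g_{i+1})^{q-1}\Pi_i(g_{i+1})=0$. By $\F_q$-linearity of $\Pi_{i+1}$, it vanishes on the full subspace $\langle g_1,\ldots,g_{i+1}\rangle$ of cardinality $q^{i+1}$, and being monic of the matching $q$-degree it must coincide with $\Pi_{\langle g_1,\ldots,g_{i+1}\rangle}$.

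For $\Lambda_{i+1}$ I would verify the interpolation conditions directly: for $j\le i$ the correction term vanishes since $\Pi_i(g_j)=0$, so $\Lambda_{i+1}(g_j)=\Lambda_i(g_j)=r_j$ by induction, while at $g_{i+1}$ the scalar coefficient was chosen precisely to cancel the residual $\Lambda_i(g_{i+1})-r_{i+1}$. Since $\Lambda_{i+1}$ is an $\F_{q^m}$-linear combination of $\Lambda_i$ (of $q$-degree $\le i-1$) and $\Pi_i$ (of $q$-degree $i$), its $q$-degree is at most $i$. To conclude that $\Lambda_{i+1}$ equals the Lagrange polynomial of Definition~\ref{defi:Lagrange}, I would invoke a uniqueness argument: any two $q$-linearized polynomials of $q$-degree $\le i$ agreeing on $g_1,\ldots,g_{i+1}$ have a difference that is $q$-linearized of $q$-degree $\le i$ yet vanishes on the $(i+1)$-dimensional subspace $\langle g_1,\ldots,g_{i+1}\rangle$, which contains $q^{i+1}>q^i$ elements, forcing the difference to be zero. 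The closed form from Definition~\ref{defi:Lagrange} satisfies both the interpolation and the degree bound, so $\Lambda_{i+1}$ must coincide with $\Lambda_{(g_1,\ldots,g_{i+1}),(r_1,\ldots,r_{i+1})}$. No step is really an obstacle; the only delicate point is the nonvanishing of $\Pi_i(g_{i+1})$, which the linear independence of the $g_j$'s together with the inductive identification of the root set supplies.
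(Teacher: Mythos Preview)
Your proof is correct and follows essentially the same route as the paper's: induction on $i$, verifying that $\Pi_{i+1}$ is monic, $q$-linearized of the right $q$-degree, and vanishes on the right subspace, and that $\Lambda_{i+1}$ satisfies the interpolation conditions with the right degree bound, followed by a root-counting uniqueness argument. If anything, your version is a bit more careful---you explicitly justify the well-definedness of the recursion via $\Pi_i(g_{i+1})\neq 0$, which the paper leaves implicit.
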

\begin{proof}
We prove this by induction on $i$. The theorem clearly holds for $i=1$. Suppose that the theorem holds for a value of $i$ with $1\leq i < n$. By definition $\Pi_{i+1}(x)= \Pi_i (x)^q - \Pi_i (g_{i+1})^{q-1} \Pi_i (x)$, so that (using the induction hypothesis)  $\Pi_{i+1}(x)$ is a monic $q$-linearized polynomial of $q$-degree $i+1$ such that for $1\leq j \leq i+1$ we have $\Pi_{i+1}(g_j)=0$. It follows that then $\Pi_{i+1}(x)$ must coincide with $\Pi_{\langle g_1, g_2, \ldots , g_{i+1}\rangle }(x)$. 

We next show that the formula for $\Lambda_{i+1}(x)$ yields the $q$-Lagrange polynomial at level $i+1$. 
%
Assume that $\Lambda_i(x)$ is the $q$-Lagrange polynomial at level $i$ and look at $\Lambda_{ {i+1}}(x)$, which is $q$-linearized since $\Lambda_i(x)$ and $ \Pi_{i}(x)$ are $q$-linearized. As $\qdeg( \Pi_{i}(x))= i > \qdeg(\Lambda_i(x))$ it holds that $\qdeg(\Lambda_{i+1}(x)) = i$. Furthermore, because $\Pi_{i}(g_j)=0$ for $j=1,\dots,i$ and $\Lambda_{ {i}}(g_j)=r_j$ for $j=1,\dots,i$ , 
$$\Lambda_{ {i+1}}(g_j) = \Lambda_{ {i}}(g_j) = r_j , \quad \textnormal{ and }$$
$$\Lambda_{ {i+1}}(g_{i+1}) = \Lambda_{ {i}}(g_{i+1}) -  \frac{ \Lambda_{ {i}}(g_{i+1}) - r_{i+1}}{\Pi_{i} (g_{i+1})}   \Pi_{i}(g_{i+1}) = r_{i+1} .$$
Therefore, $\Lambda_{ {i+1}}(x)$ evaluates to the same values as $\Lambda_{(g_1,\dots,g_{i+1}), (r_1,\dots, r_{i+1})}(x)$ 
for $g_1,\dots,g_{i+1}$. Because of the linearity of both these polynomials, they evaluate to the same values for all elements of $\langle g_1,\dots,g_{i+1}\rangle$. Because of the $\F_q$-linear independence of $g_1, g_2, \ldots g_{i+1}$, there are $q^{i+1}$ many values. Since the degree of both polynomials is $q^i < q^{i+1}$, it follows that they must be the same polynomial.
\end{proof}

Let $g_1,\dots,g_n\in\F_{q^m}$ be linearly independent over $\F_q$; as before denote $\mathbf g:=(g_1,\dots,g_n)$. Throughout the remainder of the paper we abbreviate the notation $\Pi_{\langle g_1, g_2, \ldots , g_n\rangle }(x)$ by $ \Pi_{\mathbf g}(x)$. We need the following fact for our investigations in Section \ref{sec:decoding}.

\begin{lem}\label{lem3}
 Let  $L(x) \in \Lp$ be such that $L(g_i)=0$ for all $i$. Then
\[\exists H(x)\in \Lp : L(x) = H(x)\circ \Pi_{\mathbf g}(x)  . \]
\end{lem}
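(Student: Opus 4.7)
The plan is to argue by right symbolic division in $\Lp$ together with a root-counting argument. Since $\Pi_{\mathbf g}(x)$ is monic of $q$-degree $n$, the symbolic Euclidean algorithm (on the right) produces $H(x),R(x) \in \Lp$ with
\[ L(x) = H(x) \circ \Pi_{\mathbf g}(x) + R(x), \qquad \qdeg(R) < n,\]
or $R = 0$. I want to show that $R$ must in fact be zero.

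First I would note that, since $L(x)$ is $q$-linearized and vanishes on each $g_i$, it vanishes on the entire $\F_q$-linear span $\langle g_1,\dots,g_n\rangle$, which by the $\F_q$-linear independence of the $g_i$ contains $q^n$ elements. By Lemma \ref{lem:nullpoly} the polynomial $\Pi_{\mathbf g}(x)$ also vanishes on every element of this span. Therefore $R(x) = L(x) - H(x)\circ \Pi_{\mathbf g}(x)$ evaluates to $0$ on all $q^n$ elements of $\langle g_1,\dots,g_n\rangle$ as well.

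The key step is then a degree-vs-roots comparison: a nonzero $q$-linearized polynomial of $q$-degree strictly less than $n$ has ordinary degree at most $q^{n-1}$, hence at most $q^{n-1}$ roots in any extension field. Since $R$ has at least $q^n > q^{n-1}$ roots, we must have $R(x) = 0$, giving $L(x) = H(x)\circ \Pi_{\mathbf g}(x)$ as desired.

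The only potentially delicate point is to ensure that right symbolic division by a monic element of $\Lp$ is always possible with a remainder of strictly smaller $q$-degree; this is standard in $\Lp$ (see e.g.\ \cite{ko08}) and follows from the fact that $\Pi_{\mathbf g}$ is monic, so the leading term of any dividend of $q$-degree $\geq n$ can be cancelled by composing $\Pi_{\mathbf g}$ on the right with a single monomial $c\,x^{[j]}$ of appropriate $j$ and $c\in\F_{q^m}$. I do not anticipate real difficulty in this step; the substantive content of the proof is the root-counting argument above.
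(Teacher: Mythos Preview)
Your proposal is correct and follows essentially the same approach as the paper's own proof: perform right symbolic division by $\Pi_{\mathbf g}(x)$, observe that the remainder $R(x)$ must vanish on all $q^n$ elements of $\langle g_1,\dots,g_n\rangle$, and conclude $R=0$ by comparing the number of roots with the ordinary degree. The paper phrases the degree bound as $\deg(R)<q^n$ rather than your slightly sharper $\deg(R)\leq q^{n-1}$, but the argument is otherwise identical.
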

\begin{proof}
We know from Lemma \ref{lem:nullpoly} that $\Pi_{\mathbf g}(x) \in \Lp$. Moreover unique left and right division in $\Lp$ holds, i.e.\ in this case there exist unique polynomials $H(x),R(x)\in \Lp$ such that $L(x) = H(x)\circ\Pi_{\mathbf g}(x) + R(x)$ and $\qdeg(R(x))< \qdeg (\Pi_{\mathbf g}(x)) =n$. Since any $\alpha \in \langle g_1,\dots,g_n\rangle$ is a root of $L(x)$ as well as $\Pi_{\mathbf g}(x)$, they must also be a root of $R(x)$. Hence we have $q^n$ distinct roots for $R(x)$ and $\deg(R)<q^n$, thus $R(x) \equiv 0$ and the statement follows.
\end{proof}


\subsection{Gabidulin codes}

Let $g_1,\dots, g_n \in \F_{q^m}$ be linearly independent over $\F_q$. We define a \emph{Gabidulin code} $C\subseteq \F_{q^m}^{n}$ as the linear block code with generator matrix $M_k(g_1,\dots, g_n)$, as defined in ~(\ref{eq_moore}).        
Using the isomorphic matrix representation, we can interpret $C$ as a matrix code in $\F_q^{m\times n}$.The \emph{rank distance} $d_R$ on  $\F_q^{m\times n}$ is defined by
\[d_R(X,Y):= \rk_q(X-Y) \quad, \quad X,Y \in \F_q^{m\times n} \]
and analogously for the isomorphic extension field representation. 
Then it is clear that the code $C$ has dimension $k$ over $\F_{q^m}$ and minimum rank distance (over $\F_q$) $n-k+1$. One can easily see by the shape of the parity check and the generator matrices that an equivalent definition of the code is
\[C =  \{(m(g_1),\dots,m(g_n))\in \F_{q^m}^n \mid m(x) \in \Lp_{<k}  \} ,\]
where $\Lp_{<k} := \{m(x) \in \Lp \mid \qdeg(m(x)) < k\}$. 
For more information on bounds and constructions of rank-metric codes the interested reader is referred to \cite{ga85a}.

Consider a received word $\mathbf r = (r_1,\dots,r_n) \in \F_{q^m}^n$ as the sum $\mathbf r = \mathbf c + \mathbf e$, where $\mathbf c = (c_1,\dots,c_n)\in C$ is a codeword and $\mathbf e = (e_1,\dots,e_n)\in \F_{q^m}^n$ is the error vector. 
We now recall the polynomial interpolation setup from~\cite{lo06} via a more general formulation in the next theorem.

\begin{thm}\label{thm2}
Let $m(x)\in \Lp, \qdeg(f(x))< k$ and $c_i=m(g_i)$ for $i=1,\dots,n$.
Then $d_R(\mathbf c, \mathbf r) = t$ if and only if there exists a $D(x) \in \Lp$, such that $ \qdeg(D(x))= t$ and
\[D(r_i) = D(m(g_i)) \quad \forall i\in\{1,\dots,n\}.\]
\end{thm}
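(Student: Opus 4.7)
The plan is to use the $\F_q$-additivity (in fact $\F_{q^m}$-additivity) of any $q$-linearized polynomial to translate the interpolation condition into a condition on the error vector. Writing $c_i = m(g_i)$ and $e_i := r_i - c_i$ for the components of the error $\mathbf e = \mathbf r - \mathbf c$, additivity of $D$ gives
\[
D(r_i) - D(m(g_i)) = D(r_i - c_i) = D(e_i),
\]
so the stated interpolation condition is equivalent to $D(e_i) = 0$ for every $i = 1, \ldots , n$. The whole theorem is then reduced to relating the existence of a $q$-linearized $D$ of prescribed $q$-degree annihilating all $e_i$ with the rank $\rk_q(\mathbf e) = d_R(\mathbf c, \mathbf r)$.

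For the forward direction I would set $U := \langle e_1, \ldots , e_n \rangle$ as an $\F_q$-subspace of $\F_{q^m}$, which by hypothesis has dimension exactly $t$, and take $D(x) := \Pi_U(x)$. Lemma~\ref{lem:nullpoly} ensures $\Pi_U(x) \in \Lp$, and the remark following that lemma (writing $\Pi_U$ as a scaled Moore determinant) gives $\qdeg(\Pi_U) = \dim_{\F_q} U = t$. Since each $e_i \in U$, we get $D(e_i) = 0$, so $D(r_i) = D(m(g_i))$ for all $i$.

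For the reverse direction, suppose such a $D \in \Lp$ of $q$-degree $t$ exists. Then $D(e_i) = 0$ for every $i$, i.e.\ each $e_i$ is a root of $D$. By Lemma~\ref{lem:rootspace} the root set of $D$ is an $\F_q$-linear subspace of some extension of $\F_{q^m}$; it has $\F_q$-dimension at most $t$ because $D$ has at most $q^t$ roots. Hence $\langle e_1, \ldots , e_n \rangle$ sits inside a subspace of $\F_q$-dimension at most $t$, giving $d_R(\mathbf c, \mathbf r) = \rk_q(\mathbf e) \leq t$.

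No serious obstacle is expected; both directions follow immediately once the interpolation condition has been recast as $D(e_i)=0$, and the work is done by the pre-established results on root spaces and $q$-annihilators. The one pedantic point worth flagging is that the reverse direction only yields $d_R \leq t$, so the iff is tight only under the natural reading that $t$ coincides with the minimum $q$-degree of a $D$ satisfying the interpolation condition; this is precisely the version of the statement that feeds into the decoding algorithm developed in Section~\ref{sec:decoding}.
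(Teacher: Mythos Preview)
Your approach matches the paper's exactly: rewrite the interpolation condition as $D(e_i)=0$ via additivity, take $D=\Pi_U$ with $U=\langle e_1,\ldots,e_n\rangle$ for the forward direction (Lemma~\ref{lem:nullpoly}), and bound the root-space dimension by $t$ for the reverse (Lemma~\ref{lem:rootspace}).

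Your closing observation is on point and in fact more careful than the paper's own argument. The paper's proof of the reverse direction asserts that the root space of $D$ has $q$-dimension exactly $t$ and hence that $(e_1,\ldots,e_n)$ has rank exactly $t$; neither step is justified as written (the root space has dimension \emph{at most} $t$, and the $e_i$ span only a subspace of it). So both your argument and the paper's really yield only $d_R\leq t$ in that direction, and the ``if and only if'' is tight only under the reading that $t$ is the \emph{minimum} $q$-degree of a valid $D$---which is precisely the form consumed downstream in Theorem~\ref{thm:main}. You have identified this correctly.
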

\begin{proof}
Let $D(x) \in \Lp$ be such that $D(r_i) = D(f(g_i))$ and $\qdeg(D(x)) =t$. This implies that $D(r_i - f(g_i)) = 0$ for all $i$. Define $e_i := r_i - f(g_i)$, then $e_i\in\F_{q^m}$ and every element of $\langle e_1,\dots, e_n\rangle$ is a root of $D(x)$ (see Lemma \ref{lem:rootspace}). Since $D(x)$ is non-zero and has degree $q^t$, it follows that the linear space of roots has $q$-dimension $t$, which implies that $(e_1,\dots,e_n)$ has rank  $t$. This means that the rank distance between $(c_1,\dots,c_n)$ and $(r_1,\dots,r_n)$ is equal to $t$. Thus, one direction is proven.

For the other direction let $(c_1,\dots,c_n), (r_1,\dots,r_n)$ have rank distance $t$, i.e. $(e_1,\dots,e_n) := (c_1-r_1,\dots, c_n-r_n)$ has rank $t$. Then by Lemma \ref{lem:nullpoly} there exists a non-zero $D(x)\in \Lp$ of degree $q^t$ such that $D(e_i)=0$ for all $i$. 
By linearity we get that $D(c_i)=D(r_i)$ for $i=1,\dots,n$. Since we know that $c_i=f(g_i)$, the statement follows.
\end{proof}

\begin{rem}
Theorem~\ref{thm2} states that the roots of $D(x)$ form a vector space of degree $t$ which is equal to the span of $e_1,\dots,e_n$ (for this note that $e_i=m(g_i)-r_i$). This is why $D(x)$ is unique (for given codeword and received word) and is also called the \emph{error span polynomial} (cf.\ e.g.\ \cite{si09}). The analogy in the classical Hamming metric set-up is the \emph{error locator polynomial}, whose roots indicate the locations of the errors, and whose degree equals the number of errors.
\end{rem}

%


\section{Modules over $\Lp$}\label{sec:modules}

As mentioned before, $\Lp$ forms a ring with addition and composition. Hence $\Lp^\ell$ forms a (right or left) module. In this work we will consider $\Lp^{\ell}$ as a left module and investigate its (left) submodules.

In this section, we give some general definitions and results on $\Lp^\ell$ and present the terminology of the Predictable Leading Monomial property. All of these are analogous to the definitions and results for modules over $\F_{q^m}[x]$ (equipped with normal polynomial multiplication) from \cite{al11}, see also the early work by Fitzpatrick~\cite{fi95} and the textbooks \cite{ad94b,co05b}. 
Linearized polynomials belong to the class of skew polynomials, for which the general theory of linear algebra and Gr\"obner bases is well established, see e.g.\ \cite{ab02,be10,ka90}. For reasons of clear exposition, we formulate the results that we need explicitly in terms of rings with composition, more specifically in the language of linearized polynomials. Thus, compared to the $\F_q[x]$-case, multiplication is replaced by composition. 

To avoid confusion, we denote polynomials by $f(x)$, while vectors of polynomials are denoted by $f$. If we need to index polynomials, we use the notation $f_1(x),\dots,f_s(x)$, while for vectors of polynomials we use the notation $f^{(1)}, \dots, f^{(s)}$. 

Elements of $\Lp^\ell$ are of the form 
$$f:=[f_1(x) \; \dots \; f_\ell (x)] = \sum_{i=1}^\ell f_i(x) e_i $$ 
where $f_i(x)=\sum_{j} f_{ij} x^{[j]} \in \Lp$ and $e_1,\dots,e_\ell$ are the unit vectors of length $\ell$. 
Analogous to polynomial multiplication on  $\F_{q^m}[x]^\ell$ we define for $h(x)\in \Lp$ the left operation
\[h(x)\circ f  :=  [h(f_1(x)) \; \dots \; h(f_\ell (x))]= \sum_{i=1}^\ell h(f_i(x)) e_i  .\]
The monomials of $f$ are of the form $x^{[k]} e_i$ for all $k$ such that $f_{ik}\neq 0$.

\begin{defi}
A subset $M\subseteq \Lp^\ell$ is a \emph{(left) submodule} of $\Lp^\ell$ if it is closed under addition and composition with $\Lp$ on the left. 
\end{defi}

\begin{defi}
Consider the non-zero elements $f^{(1)}, \dots, f^{(s)} \in \Lp^\ell$. 
We say that $f^{(1)}, \dots, f^{(s)}$ are \emph{linearly independent} if for any $a_1(x),\dots, a_s(x) \in \Lp$ 
\[\sum_{i=1}^s  a_i(x) \circ f^{(i)} = [\;0 \; \dots \; 0\; ] \quad \implies \quad a_1(x)=\dots =a_s(x) = 0. \]
A generating set of a submodule $M\subseteq \Lp^\ell$ is called a \emph{basis} of $M$ if all its elements are linearly independent.
\end{defi}

One can easily see that 
\[ B= \{x e_1, xe_2 \dots, x e_\ell \}\]
is a basis of  $\Lp^\ell$, thus $\Lp^\ell$ is a \emph{free} and \emph{finitely generated} module.

We need the notion of monomial order for the subsequent results, which we will define in analogy to \cite[Definition 3.5.1]{ad94b}.
\begin{defi}
A \emph{monomial order} $<$ on $\Lp^\ell$ is a total order on $\Lp^\ell$ that fulfills the following two conditions:
\begin{itemize}
\item    $ x^{[k]} e_i  < x^{[j]}\circ (x^{[k]} e_i) $ for any monomial $x^{[k]} e_i \in \Lp^\ell$ and $j\in\mathbb{N}_{>0}$. 
\item    If $x^{[k]} e_i < x^{[k']} e_{i'}$, then $x^{[j]}\circ (x^{[k]} e_i) < x^{[j]}\circ (x^{[k']} e_{i'} )$ for any monomials $x^{[k]} e_{i}, x^{[k']} e_{i'} \in \Lp^\ell$ and $j\in\mathbb{N}_0$. 
\end{itemize}
\end{defi}

We have different choices for monomial orders, of which the following is of interest for our investigations.


\begin{defi}
The \emph{$(k_1,\dots,k_\ell)$-weighted term-over-position monomial order} is defined as 
$$x^{[i_1]} e_{j_1}<_{(k_1,\dots,k_\ell)} x^{[i_2]} e_{j_2} :\iff $$ $$i_1+k_{j_1}<i_2 + k_{j_2} \textnormal{ or }  [i_1+k_{j_1} = i_2+k_{j_2} \textnormal{ and } j_1<j_2 ]  .$$ 
\end{defi}

Note that this monomial order for $\Lp^\ell$ coincides with the weighted term-over-position monomial order for $\F_{q^m}[x]$, since one could replace the $q$-degrees with normal degrees and get the classical cases.

We furthermore need the following definition in analogy to the weighted term-over-position monomial order:
\begin{defi}
The \emph{$(k_1,\dots,k_\ell)$-weighted $q$-degree} of  $[f_1(x) \;\dots \; f_\ell (x)]$ is defined as  $\max\{k_i+ \qdeg(f_i(x)) \mid i=1,\dots,\ell\}$.
\end{defi}

In the following we will not fix a monomial order. The results, if not noted differently, hold for any chosen monomial order.
\begin{defi}
We can order all monomials of an element $f\in\Lp^\ell$ in decreasing order with respect to some monomial order. Rename them such that $x^{[i_1]}e_{j_1}> x^{[i_2]}e_{j_2}> \dots $. Then
\begin{enumerate}
\item the \emph{leading monomial} $\mathrm{lm}(f)=x^{[i_1]} e_{j_1}$ is the greatest monomial of $f$.
\item the \emph{leading position} $\mathrm{lpos}(f)={j_1}$ is the vector coordinate of the leading monomial.
\item the \emph{leading term} $\mathrm{lt}(f)=f_{j_1, i_1}x^{[i_1]}e_{j_1}$ is the complete term of the leading monomial.
\end{enumerate}
\end{defi}

\vspace{0.2cm}

In order to define minimality for submodule bases we need the following notion of reduction, in analogy to \cite[Definition 4.1.1]{ad94b}. 

\begin{defi}
Let $ f, h \in \Lp^\ell$ and let $F=\{f^{(1)},\dots,f^{(s)}\}$ be a set of non-zero elements of $\Lp^\ell$. 
We say that \emph{ $f$ reduces to $h$ modulo $F$ in one step} if and only if
\[h= f- (( b_{1}x^{[a_{1}]}) \circ f^{(1)}+ \dots + (b_{k}x^{[a_{k}]} ) \circ f^{(k)} )\]
for some $a_{1},\dots,a_{k}\in \N_0$ and $b_{1},\dots, b_{k}\in \F_{q^m}$, where
\[\mathrm{lm}(f) = x^{[a_{i}]}\circ \mathrm{lm}(f^{(i)}), \quad i=1,\dots,k , \quad \textnormal{ and }\]
\[\mathrm{lt}(f) =( b_{1}x^{[a_{1}]})\circ \mathrm{lt}(f^{(1)})+ \dots + (b_{k}x^{a_{[k]}})\circ\mathrm{lt}(f^{(k)}).\]
We say that $f$ is \emph{minimal} with respect to $F$ if it cannot be reduced modulo $F$.
\end{defi}

\begin{defi}
A module basis $B$ is called \emph{minimal} if all its elements $b$ are minimal with respect to $B\backslash \{b\}$.
\end{defi}

\begin{prop}\label{prop:lpos}
Let $B$ be a basis of a module $M\subseteq\Lp^\ell$. Then $B$ is a minimal basis if and only if all leading positions of the elements of $B$ are distinct.
\end{prop}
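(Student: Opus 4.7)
The plan is to prove both directions by directly using the definition of reduction. The key structural observation is that composing a monomial $x^{[a]}$ on the left with a leading monomial $x^{[k]}e_i$ yields $x^{[a+k]}e_i$, which preserves the leading position. So the matching condition $\mathrm{lm}(f)=x^{[a]}\circ \mathrm{lm}(f^{(i)})$ required for reduction forces $\mathrm{lpos}(f)=\mathrm{lpos}(f^{(i)})$. Once this is noted, both directions of the equivalence become short.

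For the ``$\Leftarrow$'' direction, I would assume that all leading positions in $B$ are pairwise distinct and pick any $b\in B$. For every $b'\in B\setminus\{b\}$ one has $\mathrm{lpos}(b)\neq\mathrm{lpos}(b')$, and since the leading position of $x^{[a]}\circ\mathrm{lm}(b')$ equals $\mathrm{lpos}(b')$ for every $a\in\N_0$, the monomial $\mathrm{lm}(b)$ can never be written in the form $x^{[a]}\circ\mathrm{lm}(b')$. Thus no one-step reduction of $b$ modulo $B\setminus\{b\}$ is possible, so $b$ is minimal with respect to $B\setminus\{b\}$ and hence $B$ is a minimal basis.

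For the ``$\Rightarrow$'' direction I would argue by contrapositive: suppose two distinct elements $b^{(i)},b^{(j)}\in B$ share a common leading position $p$, say $\mathrm{lm}(b^{(i)})=x^{[a_i]}e_p$ and $\mathrm{lm}(b^{(j)})=x^{[a_j]}e_p$ with $a_j\ge a_i$ (after relabeling). Writing the leading terms as $\mathrm{lt}(b^{(i)})=c_i x^{[a_i]}e_p$ and $\mathrm{lt}(b^{(j)})=c_j x^{[a_j]}e_p$ with $c_i,c_j\in\F_{q^m}^{*}$, one computes
\[
(\lambda x^{[a_j-a_i]})\circ(c_i x^{[a_i]}e_p) \;=\; \lambda\, c_i^{q^{a_j-a_i}}\, x^{[a_j]}e_p ,
\]
so choosing $\lambda:=c_j/c_i^{q^{a_j-a_i}}$ gives $\mathrm{lt}(b^{(j)})=(\lambda x^{[a_j-a_i]})\circ\mathrm{lt}(b^{(i)})$ while also $\mathrm{lm}(b^{(j)})=x^{[a_j-a_i]}\circ\mathrm{lm}(b^{(i)})$. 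The required cancellation conditions in the definition of reduction are thus satisfied with $k=1$, and $b^{(j)}$ reduces in one step modulo $\{b^{(i)}\}\subseteq B\setminus\{b^{(j)}\}$. Hence $B$ is not minimal, completing the contrapositive.

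The only real subtlety, and the one I would flag in the write-up, is the non-commutativity of composition: the leading coefficient $c_i$ of $b^{(i)}$ gets raised to the power $q^{a_j-a_i}$ when one composes $x^{[a_j-a_i]}$ on the left, so the correct scalar $\lambda$ involves $c_i^{q^{a_j-a_i}}$ rather than $c_i$. Once that $q$-Frobenius twist is handled correctly, the reduction step is routine and the proposition drops out. No deeper tool is needed beyond the definitions of leading monomial, leading position, and one-step reduction.
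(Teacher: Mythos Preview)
Your proof is correct and follows essentially the same approach as the paper's: both directions rely on the observation that a one-step reduction of $f$ by $f^{(i)}$ forces $\mathrm{lpos}(f)=\mathrm{lpos}(f^{(i)})$, so distinct leading positions preclude any reduction, while a shared leading position lets the element with larger leading monomial be reduced by the other. Your write-up is simply more explicit than the paper's terse version, in particular your computation of the scalar $\lambda=c_j/c_i^{q^{a_j-a_i}}$ with the Frobenius twist spells out a detail the paper leaves implicit.
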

\begin{proof}
Let $B$ be minimal. If two elements of $B$ have the same leading position, the one with the greater leading monomial can be reduced modulo the other element, which contradicts the minimality. Hence, no two elements of a minimal basis can have the same leading position.

The other direction follows straight from the definition of reducibility and minimality of a basis, since if the leading positions of all elements are different, none of them can be reduced modulo the other elements.
\end{proof}

The property outlined in the following theorem is well-established for minimal Gr\"obner bases for modules in $\F_q[x]^\ell$ with respect to multiplication. It extends to non-commutative Gr\"obner bases of solvable type, see e.g.~\cite[Lemma 1.5]{ka90}. As a result, it also holds over the ring of linearized polynomials. It was labeled \emph{Predictable Leading Monomial (PLM) property} in~\cite{ku11} to emphasize its closeness to Forney's \emph{Predictable Degree property}~\cite{fo75}. It captures the exact property that is needed in subsequent proofs. 

Note that in \cite{ku11} minimal bases were addressed as minimal Gr\"obner bases. It can be shown that in their setting as well as in the one of this paper a minimal basis is the same as a minimal Gr\"obner basis. For explicit formulation of the theory of  Gr\"obner bases for modules in $\Lp^\ell$ the interested reader is referred to our preprint paper~\cite{ku14arxiv}.

\begin{thm}[PLM property]\label{thm:PLM}
Let $M$ be a module in $\Lp^\ell$ with minimal basis $B=\{b^{(1)},\dots,b^{(L)}\}$. Then for any $0\neq f \in M$, written as
\[f=a_1(x)\circ b^{(1)}+\dots + a_L(x)\circ b^{(L)} ,\]
where $a_1(x),\dots,a_L(x)\in \Lp$, we have
\[\mathrm{lm}(f) = \max_{1\leq i \leq L; a_i(x)\neq 0} \{\mathrm{lm}(a_i)\circ \mathrm{lm}(b^{(i)})\}  \]
where $\mathrm{lm}(a_i(x))$ is the term of $a_i(x)$ of highest $q$-degree.
\end{thm}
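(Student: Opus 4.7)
The plan is, for each index $i$ with $a_i(x) \neq 0$, to identify the leading monomial $N_i := \mathrm{lm}(a_i(x)) \circ \mathrm{lm}(b^{(i)})$ of the summand $a_i(x) \circ b^{(i)}$, and then to use Proposition \ref{prop:lpos} to argue that the largest among the $N_i$'s survives in the sum uncancelled.

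First I would establish the auxiliary fact that $\mathrm{lm}(a_i(x) \circ b^{(i)}) = \mathrm{lm}(a_i(x)) \circ \mathrm{lm}(b^{(i)})$. Expanding the composition produces monomials of the form $x^{[k+l]} e_j$ arising from a term of $q$-degree $k$ in $a_i(x)$ and a term of $q$-degree $l$ at position $j$ in $b^{(i)}$; the two axioms defining a monomial order (strict growth under non-trivial composition, and preservation of order under composition on either side) then guarantee that among these monomials, the one coming from the two leading terms is maximal. Setting $p_i := \mathrm{lpos}(b^{(i)})$, the monomial $N_i$ lives in the $p_i$-th vector coordinate.

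Let $N := \max_{i : a_i \neq 0} N_i$ and pick $i^*$ attaining the maximum. Because $B$ is a minimal basis, Proposition \ref{prop:lpos} tells us that the positions $p_1, \ldots, p_L$ are pairwise distinct. Consequently the $N_i$'s lie in pairwise distinct coordinates, which both forces $i^*$ to be unique and places $N$ in the coordinate $p_{i^*}$.

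It remains to verify that the coefficient of $N$ in $f$ is nonzero. For every $i \neq i^*$, the vector $a_i(x) \circ b^{(i)}$ has leading monomial $N_i$ with $N_i < N$, so every monomial occurring in its $p_{i^*}$-th component is $\le N_i$ and hence strictly less than $N$; such summands contribute nothing to the coefficient of $N$. For $i = i^*$, the leading term $\mathrm{lt}(a_{i^*}(x)) \circ \mathrm{lt}(b^{(i^*)})$ sits at monomial $N$ with a nonzero coefficient. Combined with the fact that no summand has a leading monomial exceeding $N$, this forces $\mathrm{lm}(f) = N$, which is the desired identity. The hard part is precisely this final non-cancellation step: the distinctness of leading positions granted by Proposition \ref{prop:lpos} is exactly what prevents different $N_i$'s from colliding in the same coordinate and cancelling the dominant term.
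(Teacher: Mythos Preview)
Your proof is correct and follows essentially the same route as the paper: both invoke Proposition~\ref{prop:lpos} to obtain distinct leading positions for the $b^{(i)}$, observe that composition with $a_i(x)$ preserves the leading position so that the leading monomials $N_i$ of the summands are pairwise distinct, and conclude that the maximal one survives uncancelled in $f$. Your write-up is somewhat more explicit than the paper's (you spell out the monomial-order axioms justifying $\mathrm{lm}(a_i\circ b^{(i)})=\mathrm{lm}(a_i)\circ\mathrm{lm}(b^{(i)})$ and the non-cancellation step), but the argument is the same.
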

\begin{proof}
Since $B$ is minimal, all leading positions and thus also all leading monomials of its elements are distinct (by Proposition \ref{prop:lpos}). 
Without loss of generality assume that $\mathrm{lm}(b^{(1)})> \mathrm{lm}(b^{(2)})> \dots > \mathrm{lm}(b^{(L)})$ and that all $a_i(x)$ are non-zero. Since $\Lp$ contains no zero divisors, we have that $\mathrm{lpos}(a_i(x)\circ b^{(i)})= \mathrm{lpos}(b^{(i)})$ for $1\leq i\leq L$. As a result, all leading positions and therefore all leading monomials of the $a_i(x)\circ b^{(i)}$'s are distinct. Thus there exist $j_1,\dots,j_L$ such that
\[\mathrm{lm}(a_{j_1}(x)\circ b^{(j_1)}) >\mathrm{lm}(a_{j_2}(x)\circ b^{(j_2)})>\dots >\mathrm{lm}(a_{j_L}(x)\circ b^{(j_L)}) .\]
It follows that 
\[\mathrm{lm}(f) =\mathrm{lm}(a_{j_1}(x)\circ b^{(j_1)})=\mathrm{lm}(a_{j_1}(x))\circ \mathrm{lm}(b^{(j_1)}) = \max_{1\leq i \leq L} \{\mathrm{lm}(a_i(x))\circ \mathrm{lm}(b^{(j_i)})\}  .\]
\end{proof}


\begin{prop}\label{prop:samelp}
The leading positions and weighted $q$-degrees of all elements of two distinct minimal bases for the same module in $\Lp^{\ell}$ have to be the same. This implies that the cardinality of both bases are equal as well.
\end{prop}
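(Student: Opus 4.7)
The plan is to use the PLM property (Theorem \ref{thm:PLM}) to set up a bijection between two minimal bases $B=\{b^{(1)},\dots,b^{(L)}\}$ and $B'=\{{b'}^{(1)},\dots,{b'}^{(L')}\}$ of the same module $M$, matched up by leading position. First I would write each ${b'}^{(j)}\in M$ as ${b'}^{(j)}=\sum_{i}a_{ij}(x)\circ b^{(i)}$. Theorem \ref{thm:PLM} then produces a unique index $\sigma(j)$ with $\mathrm{lm}({b'}^{(j)})=\mathrm{lm}(a_{\sigma(j),j})\circ\mathrm{lm}(b^{(\sigma(j))})$, and in particular $\mathrm{lpos}({b'}^{(j)})=\mathrm{lpos}(b^{(\sigma(j))})$. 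Since the leading positions within $B'$ are pairwise distinct by Proposition \ref{prop:lpos}, $\sigma$ is injective. Running the symmetric argument with $b^{(i)}=\sum_{j}c_{ji}(x)\circ{b'}^{(j)}$ gives an injection $\tau:\{1,\dots,L\}\to\{1,\dots,L'\}$, so $L=L'$ and $\sigma,\tau$ are mutually inverse bijections that pair up elements of equal leading position. This already delivers both the equality of leading position multisets and the cardinality statement.

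For the weighted $q$-degrees, I would chase the two PLM identities above for a matched pair $b^{(i)}$ and ${b'}^{(j)}$ with $\sigma(j)=i$. Composition on the left with a term $\alpha x^{[e]}$ adds $e$ to the $q$-degree of the nonzero component of a vector monomial without changing its position, so the identity $\mathrm{lm}({b'}^{(j)})=\mathrm{lm}(a_{ij})\circ\mathrm{lm}(b^{(i)})$ shows that the $q$-degree of the leading component of ${b'}^{(j)}$ equals $\qdeg(a_{ij})$ plus the $q$-degree of the leading component of $b^{(i)}$; the symmetric identity $\mathrm{lm}(b^{(i)})=\mathrm{lm}(c_{ji})\circ\mathrm{lm}({b'}^{(j)})$ yields the reverse relation. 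Adding the two forces $\qdeg(a_{ij})=\qdeg(c_{ji})=0$, so the $q$-degrees of the leading components of $b^{(i)}$ and ${b'}^{(j)}$ agree. Working with the $(k_1,\dots,k_\ell)$-weighted term-over-position order, the weighted $q$-degree of any vector equals the position weight plus the $q$-degree of its leading component, so the weighted $q$-degrees also coincide; since $\sigma$ is a bijection, the multisets of weighted $q$-degrees of $B$ and $B'$ are identical.

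The main thing to watch will be the bookkeeping around the two PLM identities: verifying that the matching index $\sigma(j)$ obtained from one direction and $\tau(\sigma(j))$ obtained from the other really close up to the \emph{same} pairing, which relies on uniqueness of leading position inside a minimal basis together with the fact that $\Lp$ has no zero divisors (so $\mathrm{lpos}(a(x)\circ f)=\mathrm{lpos}(f)$ whenever $a(x)\neq 0$). Apart from this, the argument is a near-mechanical unfolding, and the only mild subtlety is that $\mathrm{lm}$ of a non-zero element of $\Lp$ is a single monomial $\alpha x^{[e]}$, which keeps the $q$-degree-additivity under composition clean enough for the cancellation $\qdeg(a_{ij})+\qdeg(c_{ji})=0$ to go through.
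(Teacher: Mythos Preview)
Your proposal is correct and follows essentially the same approach as the paper: apply the PLM property (Theorem~\ref{thm:PLM}) in both directions to match basis elements by leading position, then use the two resulting identities $\mathrm{lm}({b'}^{(j)})=\mathrm{lm}(a_{ij})\circ\mathrm{lm}(b^{(i)})$ and $\mathrm{lm}(b^{(i)})=\mathrm{lm}(c_{ji})\circ\mathrm{lm}({b'}^{(j)})$ to force $\qdeg(a_{ij})=\qdeg(c_{ji})=0$ and hence equality of weighted $q$-degrees. Your write-up is somewhat more explicit than the paper's about why the matching is a bijection and why $\tau=\sigma^{-1}$ (via uniqueness of leading positions from Proposition~\ref{prop:lpos}), but the underlying argument is the same.
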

\begin{proof}
Let $B_1=\{ b^{(i)} \mid i=1,\dots, L\}$ and $B_2=\{ c^{(i)} \mid i=1,\dots, L'\}$ be two different minimal bases of the same module in $\Lp^\ell$. Then $ b^{(j)} $ must be a linear combination of the $c^{(i)}$ for $j=1,\dots,L$. Similarly, $ c^{(i)} $ must be a linear combination of the $b^{(j)}$ for $i=1,\dots,L'$. Hence, by the PLM property and since all leading positions are different in the bases, there exist $j' \in \{1,\dots,L'\}$ and $a(x),a'(x)\in \Lp$ such that $\lm(a(x)\circ c^{(j')}) =\lm (b^{(j)})$ and $\lm(a'(x)\circ b^{(j)}) =\lm (c^{(j')})$. 
This implies on the one hand that $\lpos(b^{(j)})=\lpos(c^{(j')})$ and on the other that $\qdeg(a(x))=\qdeg( a'(x))=0$, which implies that $\qdeg(b^{(j)})=\qdeg(c^{(j')})$.
\end{proof}


\section{Minimal List-Decoding of Gabidulin Codes}\label{sec:decoding}

In this section we describe a minimal list-decoding algorithm for Gabidulin codes. To explain this terminology further: given a received word, our list decoder outputs a list of exactly those message words that correspond to all codewords that are closest to the received word. The algorithm uses a parametrization within the interpolation module that is associated with the given received word. For this, we will need a minimal basis of this module, where minimality is with respect to the $(0,k-1)$-weighted $q$-degree. The construction of such a minimal basis will be described in the second subsection.

For the remainder of the paper 
let $g_1,\dots, g_n \in\F_{q^m}$ be linearly independent over $\F_q$ and let $M_k(g_1,\dots,g_n)$ be the generator matrix of the Gabidulin code $C\subseteq \F_{q^m}^n$. Let  $\mathbf{r}=(r_1,\dots,r_n) \in \F_{q^m}^n$ be the received word and denote $\mathbf{g}=(g_1,\dots,g_n) $.  
Moreover, throughout the remainder of this paper our monomial order will be the $(0,k-1)$-weighted term-over-position monomial order.


\subsection{The Parametrization}

In the following we abbreviate the row span of a (polynomial) matrix $A$ by $\rs(A)$.

\begin{defi}
The \emph{interpolation module} $\mathfrak{M}(\bf r)$ for $\mathbf r$ is defined as the left submodule of $\Lp^2$, given by
\[
\mathfrak{M}(\bf r) := \rs \left[\begin{array}{cc}  \Pi_\mathbf{g} (x) & 0 \\ -\Lambda_{\bf g,r}(x) & x \end{array}\right].
\]
\end{defi}

We identify any $[f(x) \quad g(x)] \in  \mathfrak{M}(\bf r)$ with the bivariate linearized $q$-polynomial $Q(x,y)= f(x) + g(y)$.
The following theorem shows that the name interpolation module is justified for $\mathfrak{M}(\bf r)$:

\begin{thm}\label{thm5}
$\mathfrak{M}(\bf r)$ consists exactly of all $Q(x,y)= f(x) +g(y)$ with $f(x), g(x) \in \Lp$, such that $Q(g_i,r_i)=0$ for $i=1,\dots,n$.
\end{thm}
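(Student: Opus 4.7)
The plan is to prove the two containments separately, using only the evaluation properties $\Pi_{\mathbf g}(g_i)=0$ and $\Lambda_{\mathbf g,\mathbf r}(g_i)=r_i$ together with Lemma \ref{lem3}.

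For the forward direction ($\mathfrak{M}(\mathbf r)$ is contained in the set of additive interpolants), I would take a generic element of the row span. Writing the left-module action explicitly, any element of $\mathfrak{M}(\mathbf r)$ has the form
\[
a(x)\circ[\Pi_{\mathbf g}(x),\,0] + b(x)\circ[-\Lambda_{\mathbf g,\mathbf r}(x),\,x] = [\,a(\Pi_{\mathbf g}(x)) - b(\Lambda_{\mathbf g,\mathbf r}(x)),\ b(x)\,]
\]
for some $a(x),b(x)\in\Lp$. Under the identification with bivariate polynomials, this corresponds to $Q(x,y)=a(\Pi_{\mathbf g}(x))-b(\Lambda_{\mathbf g,\mathbf r}(x))+b(y)$. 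Substituting $(g_i,r_i)$ and using $\Pi_{\mathbf g}(g_i)=0$ together with $\Lambda_{\mathbf g,\mathbf r}(g_i)=r_i$ immediately yields $Q(g_i,r_i)=0-b(r_i)+b(r_i)=0$.

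For the reverse direction, suppose $Q(x,y)=f(x)+g(y)$ satisfies $Q(g_i,r_i)=0$ for all $i$. Replacing $r_i$ by $\Lambda_{\mathbf g,\mathbf r}(g_i)$ turns the interpolation conditions into
\[
\bigl(f + g\circ\Lambda_{\mathbf g,\mathbf r}\bigr)(g_i) = 0 \quad \text{for all } i=1,\dots,n,
\]
i.e.\ the linearized polynomial $f(x)+g(\Lambda_{\mathbf g,\mathbf r}(x))$ vanishes on each $g_i$. Lemma \ref{lem3} then supplies an $a(x)\in\Lp$ with $f(x)+g(\Lambda_{\mathbf g,\mathbf r}(x))=a(\Pi_{\mathbf g}(x))$, hence
\[
[f(x),\,g(x)] = a(x)\circ[\Pi_{\mathbf g}(x),\,0] + g(x)\circ[-\Lambda_{\mathbf g,\mathbf r}(x),\,x] \in \mathfrak{M}(\mathbf r).
\]

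There is really no hard step here; the only point that needs a moment of care is that the module action is left composition, so the coefficient of the second basis row turns out to be $g(x)$ itself (which in turn forces the coefficient of the first row, and hence the existence of $a$, to come from Lemma \ref{lem3}). The identification $Q(x,y)\leftrightarrow[f(x),g(x)]$ is bijective on additive-separable forms, so the two containments together give the stated equality.
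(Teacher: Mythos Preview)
Your proof is correct and follows essentially the same approach as the paper's own proof: both directions use the evaluation identities $\Pi_{\mathbf g}(g_i)=0$, $\Lambda_{\mathbf g,\mathbf r}(g_i)=r_i$ for the forward inclusion, and Lemma~\ref{lem3} (right divisibility by $\Pi_{\mathbf g}$) for the reverse inclusion, with the second coefficient forced to be $g(x)$. The only cosmetic difference is notation ($a,b$ versus the paper's $\beta,\gamma$).
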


\begin{proof}
For the first direction let $Q(x,y)= f(x) +g(y)$ be an element of $\mathfrak{M}(\bf r)$. Then there exist $\beta(x), \gamma(x) \in \Lp$ such that $f(x) = \beta(x)\circ\Pi_\mathbf{g}(x) - \gamma(x)\circ\Lambda_{\bf g,r}(x)$ and $\gamma(x) = g(x)$, thus $Q(g_i,r_i)= \beta(\Pi_\mathbf{g}(g_i)) - \gamma(\Lambda_{\bf g,r}(g_i)) +\gamma(r_i) = 0 -\gamma(r_i) +\gamma(r_i) = 0$.

For the other direction let $f(x), g(x) \in \Lp$ be such that $Q(g_i,r_i)= f(g_i) +g(r_i)=0$ for $i=1,\dots,n$. To show that $Q(x,y)\in\mathfrak{M}(\bf r)$ we need to find $\beta(x) \in \Lp$ such that
\[\beta(x)\circ \Pi_\mathbf{g}(x) - \gamma(x)\circ \Lambda_{\bf g,r}(x) = f(x) \quad\textnormal{ and }\quad \gamma(x) = g(x) .\]
We substitute the second into the first equation to get 
\begin{align}
\beta(x)\circ \Pi_\mathbf{g}(x)  = f(x)+ g(x)\circ \Lambda_{\bf g,r}(x) .
\end{align}
By assumption, the equation $f(g_i)+g( \Lambda_{\bf g,r}(g_i) )= f(g_i)+g(r_i) = 0$ holds for all $i$. Then, by Lemma \ref{lem3}, it follows that $f(x)+ g(x)\circ \Lambda_{\bf g,r}(x)$ is symbolically divisible on the right by $\Pi_\mathbf{g}(x)$ and hence there exists $\beta(x)\in \Lp$ such that $(1)$ holds.
\end{proof}

Combining all the previous results we get a description of all codewords with distance $t$ to the received word in the new parametrization:

\begin{thm}\label{thm:main}
The elements $f=[N(x) \quad -D(x)]$ of $\mathfrak{M}(\bf r)$  that fulfill
\begin{enumerate}
\item $\qdeg(N(x))\leq t+k-1$,
\item $\qdeg(D(x))=t$,
\item $N(x)$ is symbolically divisible on the left by $D(x)$, i.e.\ there exists $m(x)\in\Lp$ such that $D(m(x))=N(x)$,
\end{enumerate}
are in one-to-one correspondence with the codewords of rank distance $t$ to the received word $\mathbf r$. 

Note that conditions 1) and 2) are equivalent to that the $(0,k-1)$-weighted $q$-degree of $f$ is equal to $t+k-1$ and $\lpos(f)=2$.
\end{thm}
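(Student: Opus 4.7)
My approach is to set up an explicit bijection between codewords $\mathbf c$ at rank distance $t$ from $\mathbf r$ and pairs $f=[N(x)\;\;-D(x)]$ satisfying conditions 1--3, by combining Theorem~\ref{thm5} (which translates membership in $\mathfrak{M}(\mathbf r)$ into the interpolation conditions $N(g_i)=D(r_i)$) with Theorem~\ref{thm2} (which characterizes rank distance via existence of a $q$-degree-$t$ polynomial $D(x)$ satisfying $D(m(g_i))=D(r_i)$). The bijection to be exhibited will send a codeword $\mathbf c=(m(g_1),\dots,m(g_n))$ at distance $t$, with $m(x)\in\Lp_{<k}$, to the pair $[D(m(x))\;\;-D(x)]$, where $D(x)$ is the error span polynomial noted in the remark after Theorem~\ref{thm2}.

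For the direction from $f$ to $\mathbf c$, I would first use condition 3 to produce $m(x)\in\Lp$ with $D(m(x))=N(x)$. The composition identity $\qdeg(D\circ m)=\qdeg(D)+\qdeg(m)$ for $q$-linearized polynomials, together with conditions 1 and 2, forces $\qdeg(m)\le k-1$, so $m(x)\in\Lp_{<k}$ and $\mathbf c:=(m(g_1),\dots,m(g_n))$ is a codeword of $C$. Since $f\in\mathfrak{M}(\mathbf r)$, Theorem~\ref{thm5} yields $N(g_i)=D(r_i)$ for all $i$, i.e.\ $D(m(g_i))=D(r_i)$, and then Theorem~\ref{thm2} (applied with $\qdeg(D)=t$) gives $d_R(\mathbf c,\mathbf r)=t$. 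The reverse direction is essentially symmetric: given $\mathbf c=(m(g_1),\dots,m(g_n))\in C$ at distance $t$, Theorem~\ref{thm2} supplies $D(x)$ of $q$-degree $t$ with $D(r_i)=D(m(g_i))$; setting $N(x):=D(m(x))$, the interpolation identities $N(g_i)-D(r_i)=0$ hold, so Theorem~\ref{thm5} gives $f=[N(x)\;\;-D(x)]\in\mathfrak{M}(\mathbf r)$, and conditions 1--3 are immediate.

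To turn this into a one-to-one correspondence I would invoke three uniqueness statements: $m(x)\in\Lp_{<k}$ is uniquely determined by $\mathbf c$ (because $g_1,\dots,g_n$ are $\F_q$-linearly independent and $\qdeg(m)<k\le n$), the error span polynomial $D(x)$ is unique given $\mathbf c$ and $\mathbf r$ (as recalled in the remark after Theorem~\ref{thm2}), and in the reverse direction the left symbolic division $N(x)=D(m(x))$ in $\Lp$ determines $m(x)$ uniquely. The concluding equivalence with ``$(0,k-1)$-weighted $q$-degree equal to $t+k-1$ and $\lpos(f)=2$'' is a direct unpacking of the monomial order: the weighted $q$-degree of $[N(x)\;\;-D(x)]$ is $\max\{\qdeg(N),\qdeg(D)+k-1\}$, and by the tie-breaking rule $\lpos(f)=2$ holds precisely when $\qdeg(D)+k-1\ge\qdeg(N)$. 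The only step requiring genuine care is the $q$-degree arithmetic under composition; everything else is routine bookkeeping once Theorems~\ref{thm5} and~\ref{thm2} are in hand.
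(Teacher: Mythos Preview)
Your proposal is correct and follows essentially the same route as the paper's proof: both directions are obtained by combining Theorem~\ref{thm2} with Theorem~\ref{thm5}, sending a codeword $\mathbf c=(m(g_1),\dots,m(g_n))$ at distance $t$ to $[D(m(x))\;\;-D(x)]$ and conversely recovering $m(x)$ from condition~3 with the $q$-degree bound coming from $\qdeg(D\circ m)=\qdeg(D)+\qdeg(m)$. Your write-up is in fact slightly more careful than the paper's, since you explicitly justify injectivity in both directions (uniqueness of $m$ from $\mathbf c$, uniqueness of the error span polynomial, and left cancellation in $\Lp$) and spell out the final equivalence with the $(0,k-1)$-weighted $q$-degree, whereas the paper leaves these points implicit.
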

\begin{proof}
To prove the first direction let $\mathbf c \in \F_{q^m}^n$ be a codeword such that $d_R(\mathbf c, \mathbf r)=t$ with the corresponding message polynomial $m(x)\in\Lp_{<k}$. Then by Theorem \ref{thm2} there exists $D(x)\in\Lp$ of $q$-degree $t$ such that $D(m(g_i))=D(r_i)$ for $i=1,\dots,n$. By Theorem \ref{thm5} we know that $[D(m(x)) \quad -D(x)]$ is in $ \mathfrak{M}(\bf r)$. It holds that $\qdeg(D(m(x)))\leq t+k-1$ and that $(D(m(x))$ is symbolically divisible on the left by $D(x)$.

For the other direction let $[N(x) \quad -D(x)] \in \mathfrak{M}(\bf r)$ fulfill conditions $1)-3)$. Then we know that the divisor $m(x)\in \Lp$ has $q$-degree less than $k$ and that $N(x)=D(m(x))$. Since it is in $ \mathfrak{M}(\bf r)$ we know by Theorem \ref{thm5} that $D(m(g_i))-D(r_i)=0$ for all $i$. Define $\mathbf{ c}   :=(m(g_1), \dots, m(g_n))$, then it follows from Theorem \ref{thm2} that $d_R(\mathbf c, \mathbf r)=t$.
\end{proof}

Therefore, list decoding within rank radius $t$ is equivalent to finding all elements  $[N(x) \quad -D(x)]$ in $\mathfrak{M}(\bf r)$ with $(0,k-1)$-weighted $q$-degree less than $t+k$ and leading position $2$, 
such that $N(x)$ is symbolically divisible on the left by $D(x)$.
%
Note that this is a generalization of the interpolation-based decoding method from \cite{lo06}. The difference is that our method can also decode beyond the unique decoding radius.

We can now describe the list decoding algorithm. Since in most applications one wants to find the set of all closest codewords to the received word, our algorithm will do exactly this. In contrast, a complete list decoder with a prescribed radius $t$ finds all codewords within radius $t$ from the received word, even if some of them are closer than others.


Our decoding algorithm is Algorithm \ref{alg1} below. This algorithm requires the computation of a minimal basis of $\mathfrak{M}(\bf r)$ (only once for each received word). The algorithm then iteratively searches for all elements in $\mathfrak{M}(\bf r)$ of $(0,k-1)$-weighted $q$-degree $t+k-1$ for increasing $t$ and checks the requirements of Theorem \ref{thm:main}. As soon as solutions are found, $t$ will not be increased any further and the algorithm terminates.

As in Section \ref{sec:modules} we use the notation $f = [f_{1}(x) \;\;\; f_2(x)], b^{(i)} = [b^{(i)}_{1}(x) \;\;\; b^{(i)}_2(x)]$, etc.\ for elements of the interpolation module  $\mathfrak{M}(\bf r)$.

\begin{algorithm}
\caption{Minimal list decoding of a Gabidulin code  of dimension $k$ with generators $g_1,\dots, g_n \in \F_{q^m}$.}
\label{alg1}
\begin{algorithmic}
\REQUIRE Positive integers $k, n$; ${\bf g} =(g_1 , \ldots , g_n ) \in \F_{q^m}^n$, received word ${\bf r} \in \F_{q^m}^n$. 
\STATE 1. Compute $\Pi_{\mathbf g} (x)$ and $\Lambda_{\bf g,r}(x)$, both in $\Lp$. Define the interpolation module $$\mathfrak{M}(\bf r) := \rs \left[\begin{array}{cc}  \Pi_{\mathbf g}(x) & 0 \\ -\Lambda_{\bf g,r}(x) & x \end{array}\right]  .$$
\STATE 2. Compute a minimal basis $B=\{b^{(1)}, b^{(2)}\}$ of $\mathfrak{M}(\bf r) $ with respect to the $(0,k-1)$-weighted degree, with 
$\lpos(b^{(1)}) =1$ and $\lpos(b^{(2)}) =2$.
\STATE 3. Define $\ell_1, \ell_2$ as the  $(0,k-1)$-weighted $q$-degrees of $b^{(1)}, b^{(2)}$, respectively.
\STATE 4. Define \texttt{list}$:=[ \; ]$ (an empty list) and $j:=0$.
\WHILE{\texttt{list}$=[\; ]$}
\FORALL{$\beta(x)\in \Lp, \qdeg(\beta(x))\leq \ell_2-\ell_1+j$}
\FORALL{monic $\gamma(x) \in \Lp, \qdeg(\gamma(x))=j$}
\STATE $f := \beta(x)\circ b^{(1)} + \gamma(x)\circ b^{(2)}$
\IF{there exists $ m(x) \in \Lp$ such that $f_1(x) = f_2(m(x))$} 
\STATE add $m(x)$ to \texttt{list}
\ENDIF
\ENDFOR
\ENDFOR
\STATE $j:=j+1$
\ENDWHILE
\RETURN \texttt{list}
\end{algorithmic}
\end{algorithm}

\begin{thm}
Algorithm \ref{alg1} yields a list of all message polynomials such that the corresponding codewords are closest to the received word.
\end{thm}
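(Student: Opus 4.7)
The plan is to prove both soundness and completeness of Algorithm~\ref{alg1} by combining Theorem~\ref{thm:main} (which characterises codewords at rank distance $t$ to $\mathbf r$ as certain elements of $\mathfrak{M}(\mathbf r)$) with the Predictable Leading Monomial property from Theorem~\ref{thm:PLM} (which parametrises $\mathfrak{M}(\mathbf r)$ in terms of the minimal basis $\{b^{(1)},b^{(2)}\}$).

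First I would use the PLM property to describe precisely those elements of $\mathfrak{M}(\mathbf r)$ with leading position $2$ and $(0,k-1)$-weighted $q$-degree $\ell_2+j$ for a given $j\ge 0$. Any $f\in\mathfrak{M}(\mathbf r)$ admits a unique expansion $f=\beta(x)\circ b^{(1)}+\gamma(x)\circ b^{(2)}$; applying Theorem~\ref{thm:PLM} together with the tie-breaking of the $(0,k-1)$-weighted term-over-position order (at coincident weighted $q$-degree, ties are broken in favour of the larger position), the conditions $\lpos(f)=2$ and weighted $q$-degree equal to $\ell_2+j$ become equivalent to $\gamma\neq 0$, $\qdeg(\gamma)=j$, and $\qdeg(\beta)\le \ell_2-\ell_1+j$. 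This is exactly the set enumerated by the two inner loops of the algorithm, modulo the monicity constraint on $\gamma$. Setting $t+k-1=\ell_2+j$ and invoking Theorem~\ref{thm:main}, codewords at rank distance $t$ to $\mathbf r$ correspond, up to nonzero scalar multiples in $\F_{q^m}$, to those $f=[N(x)\;\;-D(x)]$ of this shape for which $N(x)$ is left symbolically divisible by $D(x)$, the left quotient $m(x)$ being the corresponding message polynomial.

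Second I would handle the monicity normalisation. By the remark following Theorem~\ref{thm2}, the error span polynomial $D(x)$ associated with a codeword is determined uniquely up to a nonzero scalar: its root space is the $\F_q$-linear span of the error coordinates, which depends only on the codeword and $\mathbf r$. A rescaling $f\mapsto\lambda f=(\lambda x)\circ f$ propagates through the unique PLM expansion to $(\beta,\gamma)\mapsto(\lambda\beta,\lambda\gamma)$, scaling the leading coefficient of $\gamma$ by $\lambda$; hence requiring $\gamma$ to be monic selects exactly one representative per codeword. Conversely, if two distinct monic-$\gamma$ pairs yielded the same $m$, then the corresponding $f_1,f_2$ would satisfy $f_2=\lambda f_1$ for some $\lambda$, forcing $\gamma_2=\lambda\gamma_1$ and hence $\lambda=1$, contradicting distinctness. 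This shows that the inner loops at fixed $j$ output exactly the message polynomials of the codewords at rank distance $\ell_2+j-k+1$, each exactly once.

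Termination is then immediate: the outer \texttt{while} loop exits at the smallest $j=j_0$ for which the list becomes non-empty, and the previous paragraph guarantees that no codewords exist at distances smaller than $t_0:=\ell_2+j_0-k+1$, while all codewords at distance exactly $t_0$ appear in the list. The step I anticipate to be the main obstacle is the monicity bookkeeping in the second paragraph: one has to trace how scalar rescaling of $f$ propagates through the unique PLM expansion and verify rigorously that the monic-$\gamma$ slice of the parametrisation is in bijective correspondence with codewords at the prescribed distance, so that no closest codeword is lost and no duplicate message polynomial is produced.
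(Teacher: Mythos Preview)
Your proposal is correct and follows essentially the same approach as the paper: both arguments combine Theorem~\ref{thm:main} with the PLM property (Theorem~\ref{thm:PLM}) to show that the pairs $(\beta,\gamma)$ enumerated at stage $j$ parametrise exactly the module elements of leading position $2$ and weighted $q$-degree $\ell_2+j$, and hence (via the divisibility test) the codewords at distance $\ell_2+j-k+1$. Your treatment of the monicity normalisation is more careful than the paper's (which dispatches it in one sentence and does not address duplicates, since the theorem as stated does not preclude them), and you handle the ``no codeword at distance $<\ell_2-k+1$'' case implicitly through the PLM bound $\qdeg(\gamma)\ge 0$, whereas the paper separates this out as a final contradiction argument.
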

\begin{proof}
We will prove this in two steps; first  we show that any closest codeword will be in the output list, then we show that any element in the output list of the algorithm is a closest codeword.

For the first direction let $\mathbf c\in \F_{q^m}^n$ be a closest codeword with corresponding message polynomial $m(x) \in \Lp_{<k}$ and $t:=d_R(\mathbf c,\mathbf r)$. 
We know from Theorem \ref{thm:main} that there exists $f= [f_{1}(x) \;\;\; f_2(x)] = [f_2(m(x)) \;\;\; f_2(x)] \in \mathfrak{M}(\bf r)$, with $\lpos(f)=2$ and $\qdeg(f_2(x)) =t$. I.e.\ there exist $\beta(x), \gamma(x) \in \Lp$ such that $\beta(x) \circ b^{(1)} + \gamma(x) \circ b^{(2)} = f$. 
Furthermore, we know from the PLM property (Theorem \ref{thm:PLM}), that 
\[\lm(f) = \max \{\lm(\beta(x))\circ \lm(b^{(1)}) , \lm(\gamma(x))\circ \lm(b^{(2)})\}\]
which implies that $\lm(f) = \lm(\gamma(x))\circ \lm(b^{(2)}) $, since $\lpos(b^{(i)})=i$ for $i=1,2$. Therefore we get
\[  t= \qdeg(\gamma(x)) +(\ell_2-k+1) \iff  \qdeg(\gamma(x)) = t-(\ell_2-k+1)  .\]
Because of the leading position we furthermore get
\[  \qdeg(\beta(x)) + \qdeg(b_1^{(1)}(x)) \leq \qdeg(\gamma(x)) + \qdeg(b_2^{(2)}(x)) + k-1\]
\[\iff \qdeg(\beta(x))  \leq \qdeg(\gamma(x)) + \ell_2 - \ell_1  .\]
Set $j:= t-\ell_2 +k -1$, then the above conditions translate into $\qdeg(\gamma(x))=j$ and $\qdeg(\beta(x)) \leq j+\ell_2- \ell_1$, which is exactly the parametrization used in the algorithm.
We can choose $\gamma(x)$ monic, since the non-zero scalar coefficients of our polynomials are from $\F_{q^m}$ and are thus invertible. 

Note that increasing $j$ by one is equivalent to increasing $t$ by one. Therefore, if $t$ is the minimal distance between any codeword and the received word, the algorithm will not terminate at any $j<t-\ell_2 +k -1$. Thus, we have shown that the algorithm will produce $m(x)$ as an output element.

For the other direction let $m(x)$ be an element of the output list of Algorithm \ref{alg1}. Then there exists $f= [f_{1}(x) \;\;\; f_2(x)] = [f_2(m(x)) \;\;\; f_2(x)] \in \mathfrak{M}({\bf r})$. Furthermore there exist $j\in \N_0$ and $\beta(x), \gamma(x) \in \Lp$ with $\qdeg(\beta(x))\leq \ell_2 - \ell_1 +j, \qdeg(\gamma(x))=j$ such that $\beta(x) \circ b^{(1)} + \gamma(x) \circ b^{(2)} = f$. 
Since $\qdeg(b_1^{(1)})= \ell_1, \qdeg(b_2^{(1)})< \ell_1-k+1$ and $\qdeg(b_1^{(2)})\leq \ell_2, \qdeg(b_2^{(2)})< \ell_2-k+1$, it follows that
\[\qdeg(f_{1}(x)) \leq \ell_2 + j \quad, \quad \qdeg(f_2(x)) = \ell_2 + j-k +1 .  \]
Hence all the requirements of Theorem \ref{thm:main} are fulfilled for $t=\ell_2 + j-k +1$ and $m(x)$ belongs to a codeword $\mathbf{c} = (m(g_1), \dots, m(g_n))$ with $d_R(\mathbf{ c}, \mathbf r) =\ell_2 + j-k +1$.
It remains to show that there is no closer codeword. Assume there exists $m'(x) \in \Lp$ of $q$-degree less than $k$ with corresponding codeword $\mathbf c'$ such that $t':=d_R(\mathbf c', r) < \ell_2 + j-k +1$. Then, by the first part of this proof, the algorithm would terminate at $j=t'-\ell_2 +k -1 < t-\ell_2 +k -1$ and would not produce $m(x)$ as an output element.

It remains to show that there are no codewords at rank distance less than $\ell_2-k+1$, since this is the distance for the initial loop with $j=0$. 
Assume there would be such a codeword with corresponding message polynomial $m(x)\in\Lp$. Then there exists $D(x)\in\Lp$ with $q$-degree less than $\ell_2-k+1$ such that $f' := [D(m(x)) \;\; D(x)]$ is in $\mathfrak{M}(\bf r)$. Then the $(0,k-1)$-weighted $q$-degree of $f'$ is less than $\ell_2$, which means that $B$ is not a minimal basis of $\mathfrak{M}(\bf r)$, which is a contradiction.
\end{proof}


\subsection{Construction of a Minimal Basis}\label{subsec:basisconstruction}

We will now explain two different ways of obtaining the minimal basis for the interpolation module, as required in Algorithm \ref{alg1}. The first one is the extended Euclidean algorithm for composition while the second one is an iterative algorithm. 
Note that unique decoding with the Euclidean algorithm was done in a Gao-like algorithm in \cite{wa13phd}. Similarly, unique decoding (possibly beyond half the minimum distance as a list-$1$ decoder) with an iterative algorithm analogous to our Algorithm \ref{alg3} can be found in \cite{ko08,lo06,xi11}. Our algorithm differs from these works in the sense that we compute all (and not only one) closest codewords to the received word. 
For this, our set-up in terms of modules and particularly our parametrization result are novel ingredients that we believe give new insights into this topic.

\vspace{0.5cm}

The next algorithm for constructing a minimal basis of $\mathfrak{M}(\mathbf r)$ is the extended Euclidean algorithm (EEA) for $q$-linearized polynomials with respect to composition. This variant of the Euclidean algorithm is well-known, see e.g.\ \cite[Algorithm 2.3]{wa13phd}, and works analogously to the classical EEA for normal polynomials. Since we need to distinguish left and right symbolic division, there exists a right and left EEA in $\Lp$. In this work we are only interested in the right EEA. The main ingredient is right symbolic division with remainder: For given $f(x), g(x) \in \Lp$ with $\qdeg(f(x)) \geq \qdeg (g(x))$ this type of division computes $q(x), r(x) \in \Lp$ such that $\qdeg (r(x)) < \qdeg (g(x))$ and 
\[f(x) = q(x) \circ g(x) + r(x) .\]

\begin{algorithm}[hb]
\caption{Computation of minimal basis of $\mathfrak{M}(\mathbf r)$ via the extended Euclidean algorithm.}
\label{alg2}
\begin{algorithmic}
\REQUIRE Positive integers $k,n$; ${\mathbf g} \in \F_{q^m}^n$, received word ${\bf r} =(r_1 ,\ldots , r_n) \in \F_{q^m}^n$; $q$-annihilator polynomial $\Pi_{\mathbf g}(x)$ and $q$-Lagrange polynomial $\Lambda_{\bf g,r}(x)$.
\STATE Initialize $j=0$ and define $P_0(x),N_0(x),K_0(x), D_0(x)$ as 
\[ \left[\begin{array}{cc}P_0(x) & K_0(x) \\ N_0(x) & D_0(x) \end{array}\right]  :=  \left[\begin{array}{cc}  \Pi_{\mathbf g}(x) & 0 \\ -\Lambda_{\bf g,r}(x) & x \end{array}\right]  . \]
	\WHILE{$\qdeg(D_{j}) + k-1 < \qdeg(N_{j})$}
\STATE Apply right symbolic division to compute $q_{j}(x), r_{j}(x) \in \Lp$ such that $P_j(x) = q_{j}(x) \circ N_{j}(x) + r_{j}(x)$ and $\qdeg(r_{j})< \qdeg(N_{j})$.
\STATE Update 
\[ \left[\begin{array}{cc}P_{j+1}(x) & K_{j+1}(x) \\ N_{j+1}(x) & D_{j+1}(x) \end{array}\right]  :=  \left[\begin{array}{cc}  N_{j}(x) & D_{j}(x) \\  r_j(x) &  K_j (x) - q_{j}(x) \circ D_{j}(x) \end{array}\right]  . \]
\STATE Set $j:= j+1$.
\ENDWHILE
\RETURN $b^{(1)} := [\: P_j(x) \quad K_j(x)\:]$ and $b^{(2)} :=  [\: N_{j}(x) \quad D_{j}(x)\:]$
\end{algorithmic}
\end{algorithm}

\begin{thm}
Algorithm \ref{alg2} produces a minimal basis $B=\{b^{(1)}, b^{(2)}\}$ for our interpolation module $\mathfrak{M}(\bf r)$. Moreover, $\lpos(b^{(1)})=1$ and $\lpos(b^{(2)})=2$.
\end{thm}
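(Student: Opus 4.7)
The plan is to recognize Algorithm~\ref{alg2} as a sequence of invertible row operations over $(\Lp,\circ)$ applied to the two generators of $\mathfrak{M}(\mathbf r)$, so that the row span is preserved at every stage. It will then suffice to show that at termination the two rows have distinct leading positions $1$ and $2$ under the $(0,k-1)$-weighted term-over-position order; Proposition~\ref{prop:lpos} immediately upgrades this to minimality.

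For span preservation, I would observe that one iteration of the loop body corresponds to left-composing the current matrix with
\[ \left[\begin{array}{cc} 0 & 1 \\ 1 & -q_j(x) \end{array}\right], \]
which is invertible in $M_2(\Lp)$ under composition, with inverse $\left[\begin{array}{cc} q_j(x) & 1 \\ 1 & 0 \end{array}\right]$ (direct calculation). The initial matrix has $\Lp$-linearly independent rows, since $b(x)\circ x = 0$ forces $b(x)=0$ and then $a(x)\circ \Pi_{\mathbf g}(x)= 0$ forces $a(x)=0$ because $\Lp$ has no zero divisors. Hence the two rows always form a basis of $\mathfrak{M}(\mathbf r)$. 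Termination follows from the strict decrease $\qdeg(N_{j+1}) = \qdeg(r_j) < \qdeg(N_j)$ built into right symbolic division, with the convention $\qdeg(0)=-\infty$.

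The heart of the argument is the loop invariant
\[ \qdeg(P_j) \;>\; \qdeg(K_j) + k - 1, \]
which is exactly the assertion that $\lpos(b^{(1)})=1$. I would verify it by induction on $j$: at $j=0$ it holds because $K_0=0$ and $\qdeg(P_0)=n$. For the inductive step, after one iteration the new top row is $(N_j,D_j)$, so the invariant at step $j+1$ reduces to $\qdeg(N_j) > \qdeg(D_j)+k-1$, which is precisely the condition that permitted entering the loop at step $j$. Thus the invariant is maintained up to termination. At termination, the negation of the loop condition reads $\qdeg(D_j)+k-1 \geq \qdeg(N_j)$; by the tie-breaking rule of the $(0,k-1)$-weighted order, which places position $2$ above position $1$ when the weighted $q$-degrees coincide, this forces $\lpos(b^{(2)})=2$.

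The one delicate point is the interplay between the swap of rows in the update and the invariant for the new top row: what could look like a potential breakdown of the invariant is precisely the inequality used to enter the body of the loop, and one must keep these two viewpoints aligned. Beyond this piece of bookkeeping, the proof is a routine translation between weighted $q$-degrees, leading monomials, and leading positions, with Proposition~\ref{prop:lpos} delivering the final conclusion.
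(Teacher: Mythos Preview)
Your proposal is correct and follows essentially the same approach as the paper: show that each iteration is an invertible left row operation (hence the row span is preserved), and then observe that at termination the two rows have leading positions $1$ and $2$, whence minimality by Proposition~\ref{prop:lpos}. Your treatment is in fact more careful than the paper's: you explicitly prove the loop invariant $\qdeg(P_j)>\qdeg(K_j)+k-1$ by induction and address termination, whereas the paper simply asserts that ``the first still has leading position~$1$'' at the end. One trivial slip: in your update and inverse matrices the identity element of $(\Lp,\circ)$ is $x$, not $1$, so the entries $0,1,1,-q_j(x)$ should read $0,x,x,-q_j(x)$ (and similarly for the inverse); this does not affect the argument.
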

\begin{proof}
As shown in Theorem \ref{thm5}, for $j=0$ we have a basis of $\mathfrak{M}(\bf r)$. Then for $j\geq 1$ we have that
\[ \left[\begin{array}{cc}P_{j}(x) & K_{j}(x) \\ N_{j}(x) & D_{j}(x) \end{array}\right]  =  \left[\begin{array}{cc}  0 & x \\  x &  - q_{j-1}(x) \end{array}\right]  \circ  \left[\begin{array}{cc}P_{j-1}(x) & K_{j-1}(x) \\ N_{j-1}(x) & D_{j-1}(x) \end{array}\right]    \]
\[\iff   \left[\begin{array}{cc}  q_{j-1}(x) & x \\  x &  0 \end{array}\right]  \circ  \left[\begin{array}{cc}P_{j}(x) & K_{j}(x) \\ N_{j}(x) & D_{j}(x) \end{array}\right]  = \left[\begin{array}{cc}P_{j-1}(x) & K_{j-1}(x) \\ N_{j-1}(x) & D_{j-1}(x) \end{array}\right]   , \]
hence also the new matrix is a basis of $\mathfrak{M}(\bf r)$.

We know that at level $0$ both basis vectors have leading position $1$. The \textbf{while}-condition assures that the algorithm terminates as soon as we have a basis where the second vector has leading position $2$, while the first still has leading position $1$.  This automatically implies that this basis is minimal, according to Proposition \ref{prop:lpos}.
\end{proof}

\begin{ex}\label{ex15}
Consider the Gabidulin code in $\F_{2^3}\cong \F_2[\alpha]$ (with $\alpha^3=\alpha +1$) of dimension $k=2$ with generator matrix
\[
G= \left( \begin{array}{ccc} 1 & \alpha & \alpha^2 \\ 1 & \alpha^2 & \alpha^4\end{array}\right) .
\]
Thus $\mathbf g = (g_1 , g_2, g_3 )= (1,\alpha, \alpha^2)$. Suppose that the received word equals
\[\mathbf{ r} =(\: \alpha +1 \;0 \; \alpha \:) .\]
Then we construct the interpolation module
$$M(\bf r) = \rs \left[\begin{array}{cc}  \Pi_{\mathbf g}(x) & 0 \\ -\Lambda_{\bf g, \bf r}(x) & x \end{array}\right]  = \rs \left[\begin{array}{cc}  x^8 + x & 0 \\  \alpha^2 x^4+ \alpha^5 x & x \end{array}\right] .$$
To compute a minimal basis we apply symbolic division and get
\[x^8+x  = (\alpha^3 x^2) \circ ( \alpha^2 x^4+ \alpha^5 x) + \alpha^6 x^2 +  x .\]
Since $\qdeg( \alpha^3 x^2) +k-1 =2 \geq 1=\qdeg( \alpha^6 x^2 +  x)$, the algorithm terminates and a minimal basis (w.r.t. the $(0,1)$-weighted $2$-degree) of this module is
\[ \left[\begin{array}{cc}g_1^{(1)} & g_1^{(2)}\\ g_2^{(1)} &g_2^{(2)}\end{array}\right]  =  \left[\begin{array}{cc} \alpha^2 x^4+ \alpha^5 x & x\\ \alpha^6 x^2 +  x  & \alpha^3 x^2 \end{array}\right] . \]
Hence we get $\ell_1 =2$ and $ \ell_2= 2$, i.e.\ we want to use all $\beta(x)\in\mathcal{L}_2(x,2^3)$ with $2$-degree less than or equal to $0$ and all monic $\gamma(x)\in\mathcal{L}_2(x,2^3)$  with $2$-degree equal to $0$. Thus, $\beta(x)= b_0 x$ for $b_0\in\F_{2^3}$ and $\gamma(x)= x$. We get divisibility for $b_0\in\F_{2^3}\backslash \{0\}$. The corresponding message polynomials and codewords are
$$\begin{array}{ll}
m_1(x) =  {x^2 + \alpha x}  &\quad  c_1=(\: \alpha+1 \quad 0 \quad \alpha^2+1),\\
m_2(x) = \alpha^5 x^2 + \alpha^2 x  &\quad c_2=(\: \alpha+1  \quad \alpha \quad \alpha),\\
m_3(x) = \alpha^3 x^2 + \alpha^4 x  &\quad c_3=(\: \alpha^2 +1 \quad 0 \quad \alpha^2),\\
m_4(x) = \alpha^4 x^2   &\quad c_4=(\: \alpha^2 +\alpha \quad \alpha^2+1 \quad \alpha),\\
m_5(x) = \alpha^6 x^2 + \alpha^6 x  &\quad c_5=(\: 0 \quad \alpha+1 \quad 1),\\
m_6(x) = \alpha^2 x^2 + \alpha^3 x  &\quad c_6=(\: \alpha^2+\alpha+1 \quad 0 \quad \alpha),\\
m_7(x) = \alpha x^2 +  x  &\quad c_7=(\: \alpha+1 \quad 1 \quad \alpha+1).
\end{array}$$
All these codewords are rank distance $1$ away from $\mathbf{ r} =(\: \alpha +1 \;0 \; \alpha \:)$. Note that their Hamming distance to $\bf r$ varies from $1$ to $3$.
\end{ex}

\vspace{0.5cm}


We will now derive an alternative, more efficient algorithm, namely an iterative algorithm for the computation of a minimal basis of the interpolation module. For this we first need the following result:

\begin{lem}\label{lem:it_min_bas}
For $i=1,\dots,n$ denote by $\mathfrak{M}_i$ the interpolation module for $(g_1,\dots,g_i)$ and $(r_1,\dots,r_i)$. Let 
\[\left[\begin{array}{cc}  P(x) & -K(x) \\ N(x) & -D(x) \end{array}\right]\]
be a basis for $\mathfrak{M}_{i-1}$ and
$$\Gamma_i := P(g_i)-K(r_i)  \quad, \quad \Delta_i := N(g_i)- D(r_i) .$$ 
If $\Gamma_i\neq 0$, then the row vectors of 
\[\left[\begin{array}{cc}  b^{(1)} \\ b^{(2)} \end{array}\right] := \left[\begin{array}{cc}  x^q-  \Gamma_i^{q-1} x & 0 \\ \Delta_i x& -\Gamma_i x\end{array}\right] \circ \left[\begin{array}{cc}  P(x) & -K(x) \\ N(x) & -D(x) \end{array}\right]\]
form a basis of $\mathfrak{M}_i$.
If $\Delta_i\neq 0$, then the row vectors of 
\[\left[\begin{array}{cc}   \Delta_i x& -\Gamma_i x \\ 0& x^q-  \Delta_i^{q-1} x  \end{array}\right] \circ \left[\begin{array}{cc}  P(x) & -K(x) \\ N(x) & -D(x) \end{array}\right]\]
form a basis of $\mathfrak{M}_i$.
\end{lem}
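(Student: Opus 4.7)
The plan is to handle the case $\Gamma_i\neq 0$ in detail, with the case $\Delta_i\neq 0$ being entirely symmetric (swap the roles of $\Gamma_i$ and $\Delta_i$ and use $x^q-\Delta_i^{q-1}x$ as the relevant $q$-annihilator). Three things must be verified: (i) $b^{(1)}$ and $b^{(2)}$ lie in $\mathfrak{M}_i$; (ii) they generate $\mathfrak{M}_i$ as a left $\Lp$-module; (iii) they are linearly independent. For (i), both rows are $\Lp$-linear combinations of rows of the given basis of $\mathfrak{M}_{i-1}$, so by Theorem \ref{thm5} their associated bivariate polynomials already vanish at $(g_1,r_1),\dots,(g_{i-1},r_{i-1})$; it remains to check vanishing at $(g_i,r_i)$. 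For $b^{(1)}$ the evaluation collapses, via additivity of Frobenius, to $(P(g_i)-K(r_i))^q-\Gamma_i^{q-1}(P(g_i)-K(r_i)) = \Gamma_i^q-\Gamma_i^q=0$, and for $b^{(2)}$ direct substitution gives $\Delta_i\Gamma_i-\Gamma_i\Delta_i=0$.

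For (ii), take any $[A(x),\,-B(x)]\in\mathfrak{M}_i\subseteq\mathfrak{M}_{i-1}$ and expand it as $\alpha(x)\circ[P,-K]+\beta(x)\circ[N,-D]$ for some $\alpha,\beta\in\Lp$. Evaluation at $(g_i,r_i)$ yields the key constraint $\alpha(\Gamma_i)+\beta(\Delta_i)=0$. To rewrite the same element as $u\circ b^{(1)}+v\circ b^{(2)}$, I would expand the latter and use linear independence of the basis of $\mathfrak{M}_{i-1}$ to reduce to the system
$$\alpha(x)=u(x)\circ\bigl(x^q-\Gamma_i^{q-1}x\bigr)+v(\Delta_i x),\qquad \beta(x)=v(-\Gamma_i x).$$
The second equation is uniquely solvable for $v$ coefficient-by-coefficient because $\Gamma_i\neq 0$. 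With $v$ fixed, solvability of the first equation reduces, via Lemma \ref{lem3} applied to the singleton $\{\Gamma_i\}$ whose $q$-annihilator is precisely $x^q-\Gamma_i^{q-1}x$, to checking that $\alpha(x)-v(\Delta_i x)$ vanishes at $\Gamma_i$. Computing $v(\Delta_i\Gamma_i)$ from $v(-\Gamma_i y)=\beta(y)$ by setting $y=-\Delta_i$ and using $\F_q$-linearity of $\beta$ gives $v(\Delta_i\Gamma_i)=-\beta(\Delta_i)$, so the required vanishing becomes exactly $\alpha(\Gamma_i)+\beta(\Delta_i)=0$, which is our hypothesis.

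Finally, for (iii), the same decomposition shows that $u\circ b^{(1)}+v\circ b^{(2)}=0$ forces $v\circ(-\Gamma_i x)=0$ and then $u\circ(x^q-\Gamma_i^{q-1}x)=0$; since $\Lp$ has no zero divisors, both $u$ and $v$ must vanish. The step I expect to be most delicate is the substitution computation relating $v(\Delta_i\Gamma_i)$ to $\beta(\Delta_i)$: what would be a trivial scalar manipulation in the classical commutative polynomial setting must here be treated as a substitution into a linearized polynomial, carefully exploiting the $\F_q$-linearity of $\beta$ and the non-trivial interaction between right-composition by $-\Gamma_i x$ and the Frobenius powers occurring in the coefficients of $v$.
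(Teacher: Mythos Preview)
Your proposal is correct, and your argument for generation (step (ii)) takes a genuinely different route from the paper's. The paper establishes generation by showing that the two \emph{specific} generators $[\Pi_i(x)\;\;0]$ and $[\Lambda_i(x)\;\;-x]$ of $\mathfrak{M}_i$ lie in the span of $b^{(1)},b^{(2)}$; to do this it invokes the recursive construction of $\Pi_i$ and $\Lambda_i$ from Proposition~\ref{prop:Lagrec} and exhibits explicit coefficients $\beta,\gamma,c,d$ built out of $\bar\beta,\bar\gamma,\bar c,\bar d$ (the coordinates of $[\Pi_{i-1}\;\;0]$ and $[\Lambda_{i-1}\;\;-x]$ in the old basis). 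Your route is more intrinsic: you take an \emph{arbitrary} element of $\mathfrak{M}_i$, read off its coordinates $\alpha,\beta$ in the old basis, and reduce spanning to the single right-divisibility condition $(\alpha(x)-v(\Delta_i x))(\Gamma_i)=0$, which you verify via the evaluation constraint $\alpha(\Gamma_i)+\beta(\Delta_i)=0$ and Lemma~\ref{lem3} applied to $\Pi_{\langle\Gamma_i\rangle}(x)=x^q-\Gamma_i^{q-1}x$. This avoids any dependence on Proposition~\ref{prop:Lagrec} and on the particular shape of the canonical generators, and the ``delicate'' substitution you flagged is in fact entirely routine once you note that $\Gamma_i,\Delta_i$ commute as scalars and that $-1\in\F_q$ so $\beta(-\Delta_i)=-\beta(\Delta_i)$. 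Your treatment also explicitly verifies linear independence of $b^{(1)},b^{(2)}$, which the paper's proof leaves implicit.
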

\begin{proof}
We first consider the first case and show that both $b^{(1)}$ and $b^{(2)}$ are in $\mathfrak{M}_i$. From the assumptions it follows that $P(g_j)=K(r_j)$ and that $N(g_j)=D(r_j)$ for $1\leq j<i$. Moreover, the two entries of $b^{(1)}$ are given by
$$(x^q- \Gamma_i^{q-1} x) \circ P(x) = P(x)^q-\Gamma_i^{q-1} P(x), $$
$$(x^q- \Gamma_i^{q-1} x) \circ K(x) = K(x)^q- \Gamma_i^{q-1} K(x) ,$$
thus
$P(g_j)^q- \Gamma_i^{q-1} P(g_j) - K(r_j)^q+ \Gamma_i^{q-1} K(r_j)=0$ for $1\leq j\leq i$.
For $b^{(2)}$ we get
$$\Delta_i P(g_j) - \Gamma_i N(g_j) - \Delta_i K(r_j) + \Gamma_i D(r_j) =$$ $$ \Delta_i (P(g_j) -K(r_j)) - \Gamma_i (N(g_j) -D(r_j)) = \Delta_i \Gamma_i - \Gamma_i \Delta_i= 0$$
for $1\leq j\leq i$. Thus, $b^{(1)}$ and $b^{(2)}$ are elements of $\mathfrak{M}_i$. 

It remains to show that $b^{(1)}$ and $b^{(2)}$ span the entire interpolation module (and not just a submodule of it). For this, it is sufficient to show that $[\; \Pi_{i-1}(x) \quad 0 \;]$ and $[\; \Lambda_{i-1}(x) \quad -x \;]$ are linear combinations of  $b^{(1)}$ and $b^{(2)}$. 
Since $[\; P(x) \quad -K(x) \;]$ and $[\; N(x) \quad -D(x) \;] $ form a basis of $\mathfrak{M}_i$, there exist $\bar \beta(x), \bar \gamma(x) \in \Lp$ such that 
$$\bar \beta(x)\circ [\; P(x) \quad -K(x) \;] + \bar \gamma(x) \circ [\; N(x) \quad -D(x) \;] = [\; \Pi_{i-1}(x) \quad 0 \;].$$ 
Let  $\beta(x), \gamma(x)\in\Lp$ be such that
$$\beta(x)\circ (x^q-\Gamma_i^{q-1}x) =(x^q-\Pi_{i-1}(g_i)^{q-1} x)\circ\left( \bar \gamma(\frac{\Delta_i}{\Gamma_i}x)+\bar \beta(x)\right),$$ 
$$\gamma(x) = -(x^q-\Pi_{i-1}(g_i)^{q-1} x)\circ \bar \gamma(\frac{1}{\Gamma_i}x)  .$$ 
Note that it can easily be checked that $\Gamma_i$ is a root of the right side of the previous equation, thus $\beta(x)$ is well-defined by Lemma \ref{lem3} . 
Denote the first and second row of the new basis by $b^{(1)}$ and $b^{(2)}$, respectively. Then 
$$\beta(x)\circ b^{(1)}_1 + \gamma(x) \circ b^{(2)}_1 = \beta(x) \circ (x^q- \Gamma_i^{q-1} x) \circ  P(x)    +  \gamma(x)\circ ( \Delta_i P(x)  -\Gamma_i  N(x) ) $$
$$ = (x^q-\Pi_{i-1}(g_i)^{q-1} x)\circ\left( \bar \gamma(\frac{\Delta_i}{\Gamma_i}x)+\bar \beta(x)\right) \circ  P(x) +      \gamma(\Delta_i P(x)) - \gamma(\Gamma_i N(x))  $$
$$ = \left( (x^q-\Pi_{i-1}(g_i)^{q-1} x)\circ \bar  \beta(x) - \gamma(\Delta_i x) \right) \circ  P(x) +      \gamma(\Delta_i P(x)) - \gamma(\Gamma_i N(x))  $$
$$ = \left( (x^q-\Pi_{i-1}(g_i)^{q-1} x)\circ \bar  \beta(x) \right) \circ  P(x) +  (x^q-\Pi_{i-1}(g_i)^{q-1} x)\circ \bar \gamma( N(x))  $$
$$ = (x^q-\Pi_{i-1}(g_i)^{q-1} x)\circ \left( \bar  \beta(  P(x)) +  \bar \gamma( N(x)) \right) $$ $$=  (x^q-\Pi_{i-1}(g_i)^{q-1} x)\circ \Pi_{i-1}(x) = \Pi_i(x)  , $$
and 
$$\beta(x)\circ b^{(1)}_2 + \gamma(x) \circ b^{(2)}_2 = -\beta(x) \circ (x^q- \Gamma_i^{q-1} x) \circ  K(x)    -  \gamma(x)\circ ( \Delta_i K(x)  -\Gamma_i  D(x) ) $$
$$ = \left( (x^q-\Pi_{i-1}(g_i)^{q-1} x)\circ \bar  \beta(x) - \gamma(\Delta_i x) \right) \circ  K(x) +      \gamma(\Delta_i K(x)) - \gamma(\Gamma_i D(x))  $$
$$ = (x^q-\Pi_{i-1}(g_i)^{q-1} x)\circ \left( \bar  \beta(  K(x)) +  \bar \gamma( D(x)) \right) = 0 . $$
Thus 
$\beta(x)\circ b^{(1)} + \gamma(x) \circ b^{(2)} = [\: \Pi_{i}(x) \quad 0\: \;]  ,$
 i.e.\  $[\; \:\Pi_{i}(x) \quad 0\: \;]$ is in the module spanned by the new basis. 

Analogously, if we have that $ \bar c(x)\circ [\; P(x) \quad -K(x) \;] + \bar d(x) \circ [\; N(x) \quad -D(x) \;] = [\; \Lambda_{i-1}(x) \quad -x \;] $ and define $c(x), d(x) \in \Lp$ such that
$$d(x) =- \left(\bar d(x) + \frac{\Lambda_{i-1}(g_i) - r_i}{\Pi_{i-1}(g_i)} \bar \gamma(x)\right) \circ (\frac{1}{\Gamma_i} x) $$
$$c(x) \circ (x^q-\Gamma_i^{q-1}x) = \bar c(x) +\frac{\Lambda_{i-1}(g_i) - r_i}{\Pi_{i-1}(g_i)} \bar \beta(x) - d(\Delta_ix) $$
we get
$$c(x)\circ b^{(1)} + d(x) \circ b^{(2)} = [\; \Lambda_{i}(x) \quad -x \;] .$$
Hence, we have shown that the new basis $\{b^{(1)}, b^{(2)}\}$ spans the entire interpolation module.

For the second case note that
$$ \rs \left(\left[\begin{array}{cc}   \Delta_i x& -\Gamma_i x \\ 0& x^q-  \Delta_i^{q-1} x  \end{array}\right] \circ \left[\begin{array}{cc}  P(x) & -K(x) \\ N(x) & -D(x) \end{array}\right] \right)
$$
$$
= \rs \left(\left[\begin{array}{cc}    x^q-  \Delta_i^{q-1} x & 0 \\  \Gamma_i x & -\Delta_i x  \end{array}\right] \circ \left[\begin{array}{cc}  N(x) & -D(x)\\ P(x) & -K(x)  \end{array}\right] \right),$$
which corresponds to the first case after exchanging $P(x)$ with $N(x)$ and $K(x)$ with $D(x)$ (and vice versa).
\end{proof}

\begin{rem}
In the notation of Proposition~\ref{prop:Lagrec}, applying the previous theorem to $P(x)=\Pi_{i-1}(x), K(x)=0, N(x)=\Lambda_{i-1}(x)$ and $D(x)=-x$, leads to a computation that is identical up to a constant to the one in Proposition~\ref{prop:Lagrec} in which the $q$-annihilator polynomial and the $q$-Lagrange polynomial are iteratively constructed.
\end{rem}

Using Lemma \ref{lem:it_min_bas} as our main ingredient, we now set out to design an iterative algorithm that computes a minimal basis for $\mathfrak{M}_i$ at each step $i$.

\begin{algorithm}
\caption{Iterative computation of a minimal basis of $\mathfrak{M}(\mathbf r)$.}
\label{alg3}
\begin{algorithmic}
\REQUIRE Positive integers $k,n$; ${\mathbf g} =(g_1 ,\ldots , g_n) \in \F_{q^m}^n$, received word ${\bf r} =(r_1 ,\ldots , r_n) \in \F_{q^m}^n$.
\STATE Initialize   $j:=0$, $B_0 :=  \left[\begin{array}{cc}x& 0 \\ 0 & x \end{array}\right] $ .
\STATE We denote $B_i := \left[\begin{array}{cc}  P_i(x) & -K_i(x) \\ N_i(x) & -D_i(x) \end{array}\right] $.
\FOR{$i$ from $1$ to $n$}
\STATE  
$ \Gamma_i := P_{i-1}(g_i)-K_{i-1}(r_i)\quad, \quad  \Delta_i := N_{i-1}(g_i)- D_{i-1}(r_i).$
\IF{[$\qdeg(P_{i-1}(x))\leq \qdeg(D_{i-1}(x))+k-1\;  \AND \; \Gamma_i\neq 0$] or  $\Delta_i =0$}
\vspace{0.4cm}
\STATE
$B_i := \left[\begin{array}{cc}  x^q-\Gamma_i^{q-1} x & 0 \\ \Delta_i x& -\Gamma_i x\end{array}\right] \circ B_{i-1}$
\ELSE
\vspace{0.4cm}
\STATE
$B_i:=\left[\begin{array}{cc} \Delta_i x& -\Gamma_i x \\  0 & x^q-\Delta_i^{q-1} x \end{array}\right] \circ B_{i-1}$
\ENDIF
\ENDFOR
\RETURN \textbf{$B_n$}
\end{algorithmic}
\end{algorithm}
\begin{thm}
Algorithm \ref{alg3} yields a minimal basis of the interpolation module $\mathfrak{M}(\bf r )$, where the leading position of the first row is $1$ and the leading position of the second row is $2$.
\end{thm}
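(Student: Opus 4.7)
The plan is to prove by induction on $i \in \{0, 1, \ldots, n\}$ that $B_i$ is a basis of the interpolation module $\mathfrak{M}_i$ for the points $(g_1,\ldots,g_i)$ and received values $(r_1,\ldots,r_i)$, and that the two rows of $B_i$ have leading positions $1$ and $2$ respectively with respect to the $(0,k-1)$-weighted term-over-position order. By Proposition~\ref{prop:lpos}, distinct leading positions is exactly minimality, so applying the claim at $i=n$ yields the theorem since $\mathfrak{M}_n = \mathfrak{M}(\mathbf r)$.

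The base case $i=0$ is immediate: $B_0 = \mathrm{diag}(x,x)$ spans $\Lp^2 = \mathfrak{M}_0$ because $x$ is the identity for composition, and the two rows trivially have leading positions $1$ and $2$. For the inductive step, that $B_i$ spans $\mathfrak{M}_i$ follows from Lemma~\ref{lem:it_min_bas}: the algorithm's first branch corresponds to the first update formula in that lemma and is legitimate because whenever $\Delta_i \neq 0$ the branch guard forces $\Gamma_i \neq 0$, while if $\Delta_i = 0$ the second row of $B_i$ is just $-\Gamma_i$ times the second row of $B_{i-1}$ (which already lies in $\mathfrak{M}_i$). The second branch corresponds to the second update formula, and the branching condition ensures $\Delta_i \neq 0$ as required by the lemma.

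The core of the argument is the degree bookkeeping that verifies the leading-position property is preserved. Write the induction hypothesis as $\qdeg(P_{i-1}) > \qdeg(K_{i-1}) + k-1$ and $\qdeg(D_{i-1}) + k-1 \geq \qdeg(N_{i-1})$, where equality in the second inequality is permitted since ties in the weighted order go to position~$2$. In the first branch, the lifted first row has $q$-degrees $\qdeg(P_{i-1})+1$ and $\qdeg(K_{i-1})+1$, so it inherits leading position~$1$; for the second row, the branching guard $\qdeg(P_{i-1}) \leq \qdeg(D_{i-1}) + k-1$ forces the first entry $\Delta_i P_{i-1}(x) - \Gamma_i N_{i-1}(x)$ to have $q$-degree at most $\qdeg(D_{i-1}) + k-1$, whereas the second entry $-\Delta_i K_{i-1}(x) + \Gamma_i D_{i-1}(x)$ has $q$-degree exactly $\qdeg(D_{i-1})$ (since $\Gamma_i \neq 0$ and the $K_{i-1}$ term is dominated by induction), giving leading position~$2$. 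In the second branch, the lifted second row has $q$-degrees $\qdeg(N_{i-1})+1$ and $\qdeg(D_{i-1})+1$ and hence leading position~$2$ by induction, while failure of the first branch guard forces either $\Gamma_i = 0$ (in which case the new first row is $\Delta_i$ times the old one) or $\qdeg(P_{i-1}) > \qdeg(D_{i-1}) + k-1$, which makes the $\Delta_i P_{i-1}(x)$ term strictly dominate and keeps leading position~$1$.

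The main obstacle is precisely this degree tracking for the mixed rows: one must confirm that the sign of the comparison between $\qdeg(P_{i-1})$ and $\qdeg(D_{i-1}) + k-1$ selects exactly the update from Lemma~\ref{lem:it_min_bas} that keeps the two leading positions distinct, with the subsidiary cases $\Gamma_i = 0$ and $\Delta_i = 0$ handled separately to ensure the basis property is not lost. Once this case analysis is complete, applying Proposition~\ref{prop:lpos} at $i = n$ converts the preserved leading-position property into minimality and the theorem follows.
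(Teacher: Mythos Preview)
Your proposal is correct and follows essentially the same approach as the paper's proof: induction on $i$, invoking Lemma~\ref{lem:it_min_bas} for the basis property of each $B_i$, and tracking $(0,k-1)$-weighted $q$-degrees through the two branches to maintain distinct leading positions, then appealing to Proposition~\ref{prop:lpos} for minimality. If anything, you are more explicit than the paper about the degenerate cases $\Gamma_i=0$ and $\Delta_i=0$; the one point neither you nor the paper spells out is that $\Gamma_i$ and $\Delta_i$ cannot vanish simultaneously (which follows from the $\F_q$-linear independence of the $g_j$ via $\Pi_{i-1}(g_i)\neq 0$), but this is a minor omission shared by both arguments.
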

\begin{proof}
Denote by $M_1$ the matrix we multiply by on the left in the first IF statement and by $M_2$ the one in the ELSE statement of the algorithm. 
We know from Lemma \ref{lem:it_min_bas} that at each step, $B_i$ is a basis for the interpolation module $\mathfrak M_i$. 
We now show that it is a minimal basis with respect to the $(0,k-1)$-weighted term-over-position monomial order via induction on $i$. Assume that at step $i$ the first row has leading position $1$ and the second row has leading position $2$, i.e.\ $\qdeg(P_i(x)) > \qdeg(K_i(x))+k-1$ and $\qdeg(N_i(x)) \leq \qdeg(D_i(x))+k-1$.  
If $\qdeg(P_i(x))\leq \qdeg(D_i(x))+k-1$ we composite on the left by $M_1$. Hence, 
$$ \qdeg(P_{i+1}(x)) = \qdeg(P_i(x)) +1 $$
and
$$ \qdeg(K_{i+1}(x)) = \qdeg(K_i(x)) +1 < \qdeg(P_{i}(x)) -k+2 =   \qdeg(P_{i+1}(x)) -k+1.$$ 
Thus, the leading position of the first row of $B_{i+1}$ is still $1$. 
Moreover, 
$$ \qdeg(N_{i+1}(x)) \leq \max\{\qdeg(P_i(x)) , \qdeg(N_i(x))\} \leq  \qdeg(D_i(x))+k-1$$
and, since the assumptions imply that $\qdeg(K_i(x)) < \qdeg(D_i(x))$,
{\small $$ \qdeg(D_{i+1}(x)) = \max\{\qdeg(K_i(x)) , \qdeg(D_i(x))\} = \qdeg(D_i(x)) .$$}
Thus the leading position of the second row is $2$. Since the assumptions are true for $B_0$ the statement follows via induction. 

Analogously one can prove that composition with $M_2$ yields a basis of $\mathfrak M_i$ with different leading positions in the two rows. I.e.\ at each step we get a basis of $\mathfrak M_i$ with different leading positions, which is by Proposition \ref{prop:lpos} a minimal basis.  
Thus, after $n$ steps, $B_n$ is a minimal basis for the interpolation module $\mathfrak M(\bf r)$. 
\end{proof}

\begin{rem}
It can be verified that, due to the linear independence of $g_1, \ldots , g_k$, the first $k$ steps of the algorithm coincide up to a constant with the computation in Proposition~\ref{prop:Lagrec}. In other words, up to a constant, at step $k$ the algorithm has computed the $q$-annihilator polynomial and the $q$-Lagrange polynomial corresponding to the data so far.
\end{rem}

\begin{ex}
Consider the same setting as in Example \ref{ex15}, i.e.\ a
 Gabidulin code in $\F_{2^3}\cong \F_2[\alpha]$ (with $\alpha^3=\alpha +1$) with generator matrix 
\[G= \left( \begin{array}{ccc} 1 & \alpha & \alpha^2 \\ 1 & \alpha^2 & \alpha^4\end{array}\right) \]
and the received word $\mathbf{ r} =(\: \alpha^3 \;0 \; \alpha \:) $. We iteratively compute
$$B_1=\left[\begin{array}{cc}  x^2+x & 0 \\ (\alpha+1)x & x \end{array}\right] , $$
$$B_2=\left[\begin{array}{cc}  x^4+(\alpha^2 + \alpha+1)x^2+(\alpha^2 + \alpha)x & 0 \\ (\alpha^2+\alpha)x^2 +(\alpha^2 + \alpha+1)x & (\alpha^2+\alpha)x \end{array}\right] ,$$
$$B_3=\left[\begin{array}{cc}  \alpha^2 x^4+ \alpha^5 x & x\\ \alpha x^4 +\alpha^4 x^2 + x & \alpha x^2+\alpha^6 x \end{array}\right] .$$
$B_3$ is a minimal $(0,1)$-weighted basis of the interpolation module. 
We get $\ell_1 =2$ and $ \ell_2= 2$, i.e.\ we want to use all $\beta(x)\in\mathcal{L}_2(x,2^3)$ with $2$-degree less than or equal to $0$ and all monic $\gamma(x)\in\mathcal{L}_2(x,2^3)$  with $2$-degree equal to $0$. Thus, $\beta(x)= b_0 x$ for $b_0\in\F_{2^3}$ and $\gamma(x)= x$. We get divisibility for $b_0\in \F_{2^3}\backslash \{ \alpha^6\}$. The corresponding message polynomials are indeed the same as the ones from Example \ref{ex15}, although the minimal basis of the interpolation module differs from the one in Example \ref{ex15}. This can also be verified by the fact that
$$\left[\begin{array}{cc}  \alpha^2 x^4+ \alpha^5 x & x\\ \alpha^6 x^2 + x & \alpha^3 x^2 \end{array}\right] = \left[\begin{array}{cc}  x & 0 \\ \alpha x & \alpha^2 x \end{array}\right] \circ \left[\begin{array}{cc}  \alpha^2 x^4+ \alpha^5 x & x\\ \alpha x^4 +\alpha^4 x^2 + x & \alpha x^2+\alpha^6 x \end{array}\right] $$
$$\implies  [\; \beta'(x) \quad x \; ] \circ \left[\begin{array}{cc}  \alpha^2 x^4+ \alpha^5 x & x\\ \alpha^6 x^2 + x & \alpha^3 x^2 \end{array}\right] =$$ $$ \alpha^2 [\; \alpha^5 \beta'(x) + \alpha^6 x \quad x \; ] \circ  \left[\begin{array}{cc}  \alpha^2 x^4+ \alpha^5 x & x\\ \alpha x^4 +\alpha^4 x^2 + x & \alpha x^2+\alpha^6 x \end{array}\right] ,$$
which implies that $\beta'(x)$ in the setting of Example \ref{ex15} corresponds to $\beta(x) = \alpha^5 \beta'(x) + \alpha^6 x$ in this example.
\end{ex}

The next three lemmas present several properties of the two elements of a minimal basis of the interpolation module; these properties are later used to prove our main result.

\begin{lem}\label{lem:l1andl2}
Consider a Gabidulin code $C\subseteq  \F_{q^m}^n$  of dimension $k$.
Let $\mathfrak{M}(\bf r)$ be the interpolation module of the received word $\mathbf{ r}\in \F_{q^m}^n$ with minimal basis $B=\{b^{(1)},b^{(2)}\}$ where $\lpos(b^{(i)}) = i$ for $i=1,2$. 
Furthermore denote by $\ell_i $ be the $(0,k-1)$-weighted $q$-degree of $b^{(i)}$ for $i=1,2$. Then
\[\ell_1 + \ell_2 = n+k-1\]
or equivalently
\[ \qdeg( b_1^{(1)}) + \qdeg(b_2^{(2)}) = n .\]
\end{lem}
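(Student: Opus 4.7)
The plan is to prove the equality by a graded dimension count of $\F_{q^m}$-vector spaces in $\Lp^2$. By Proposition~\ref{prop:samelp}, the pair $(\ell_1,\ell_2)$ depends only on the module $\mathfrak{M}(\mathbf{r})$ and not on the particular minimal basis, so I may work with the basis $B$ from the statement throughout.

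For each integer $s \geq k-1$, let $V_s \subseteq \Lp^2$ denote the $\F_{q^m}$-vector space of elements of $(0,k-1)$-weighted $q$-degree at most $s$. A direct coefficient count gives $\dim_{\F_{q^m}} V_s = (s+1) + (s-k+2) = 2s-k+3$. To count how many elements of $\mathfrak{M}(\mathbf{r})$ lie in $V_s$, I would invoke Theorem~\ref{thm:PLM} together with the assumption $\lpos(b^{(i)})=i$: every element of $\mathfrak{M}(\mathbf{r})$ is uniquely of the form $a_1(x)\circ b^{(1)} + a_2(x)\circ b^{(2)}$, and its weighted $q$-degree equals $\max\{\qdeg(a_1)+\ell_1,\,\qdeg(a_2)+\ell_2\}$. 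Hence, for all sufficiently large $s$,
\[\dim_{\F_{q^m}}\bigl(\mathfrak{M}(\mathbf{r})\cap V_s\bigr) = (s-\ell_1+1)+(s-\ell_2+1) = 2s-\ell_1-\ell_2+2.\]

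Next, by Theorem~\ref{thm5}, $\mathfrak{M}(\mathbf{r})$ is exactly the kernel of the $\F_{q^m}$-linear evaluation map $\Phi:\Lp^2 \to \F_{q^m}^n$ given by $[f(x)\ g(x)]\mapsto (f(g_i)+g(r_i))_{i=1}^n$. I would then show that $\Phi|_{V_s}$ is already surjective for $s \geq n-1$: for any target $(v_1,\dots,v_n)\in\F_{q^m}^n$, the element $[\Lambda_{\mathbf{g},\mathbf{v}}(x)\ \ 0]$ built via Definition~\ref{defi:Lagrange} has $q$-degree at most $n-1$ and maps to $(v_1,\dots,v_n)$. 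Rank-nullity then yields
\[n \;=\; \dim V_s - \dim\bigl(\mathfrak{M}(\mathbf{r})\cap V_s\bigr) \;=\; (2s-k+3)-(2s-\ell_1-\ell_2+2) \;=\; \ell_1+\ell_2-k+1,\]
which rearranges to the claimed $\ell_1+\ell_2 = n+k-1$.

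The main technical point will be carefully justifying the PLM-based dimension formula, namely both the uniqueness of the expansion $a_1\circ b^{(1)}+a_2\circ b^{(2)}$ (from linear independence of a basis) and the exact weighted-$q$-degree identity furnished by Theorem~\ref{thm:PLM} applied with leading positions $1$ and $2$. Should that route become cumbersome, a concrete alternative is to induct on $j$ in the extended Euclidean Algorithm~\ref{alg2}, maintaining the invariant $\qdeg(P_j)+\qdeg(D_j)=n$ (which holds at $j=0$ because $\qdeg(\Pi_{\mathbf{g}})=n$ and $\qdeg(x)=0$) and using the update rules together with the auxiliary invariants $\qdeg(N_j)<\qdeg(P_j)$ and $\qdeg(K_j)<\qdeg(D_j)$; at termination this translates via $\ell_1=\qdeg(P_j)$ and $\ell_2=\qdeg(D_j)+k-1$ into the same equation.
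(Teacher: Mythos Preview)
Your proof is correct and takes a genuinely different route from the paper. The paper's argument is algorithmic: it invokes Proposition~\ref{prop:samelp} just as you do, but then simply traces Algorithm~\ref{alg3}, observing that the initial basis has weighted $q$-degrees $0$ and $k-1$ and that each of the $n$ iterative steps raises the $q$-degree of exactly one row by~$1$, so the sum ends at $n+k-1$. Your approach is instead an intrinsic Hilbert-function count: you compute $\dim_{\F_{q^m}} V_s$ directly, use Theorem~\ref{thm:PLM} to compute $\dim_{\F_{q^m}}(\mathfrak{M}(\mathbf{r})\cap V_s)$, and identify the interpolation module as the kernel of a surjective evaluation map via Theorem~\ref{thm5} and Definition~\ref{defi:Lagrange}; rank--nullity then forces the relation. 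The paper's route is shorter because it piggybacks on the already-established correctness of Algorithm~\ref{alg3}, but it depends on that specific construction; your argument is algorithm-free and would transfer verbatim to any setting where one has a minimal basis and an evaluation-kernel description. Your fallback via Algorithm~\ref{alg2} is closer in spirit to the paper's proof (an invariant maintained through an algorithm), though the paper uses Algorithm~\ref{alg3} rather than the EEA.
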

\begin{proof}
By Proposition \ref{prop:samelp} we know that the $q$-degrees of any minimal basis of the interpolation module $\mathfrak M({\bf r})$ have to add up to the same number, hence it is enough to show that they add up to $n+k-1$ for one particular basis. Consider the iterative construction of a minimal basis from Algorithm \ref{alg3}. It is easy to see that the initial basis has weighted $q$-degrees $0$ and $k-1$. Moreover, at each step the $q$-degree of one row is increased by one, whereas the $q$-degree of the other row remains the same. Thus, the sum of the two $q$-degrees is increased by $1$ at each step. Since we get the desired basis of  $\mathfrak M({\bf r})$ at the $n$-th step, the statement follows.
\end{proof}


\begin{lem}\label{lem:l1}
In the setting of Lemma \ref{lem:l1andl2} let $t:=\min\{d_R(\mathbf{ c}, \mathbf{ r})\mid \mathbf{ c} \in C\}$. Then 
\[\ell_2 \leq t+k-1\]
or equivalently $\qdeg(b_2^{(2)}) \leq  t$ .
Furthermore, 
 \[\ell_1 =\qdeg(b_1^{(1)}) \geq n-t .\]
\end{lem}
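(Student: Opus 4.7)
The plan is to produce, from a closest codeword, a witness element of $\mathfrak{M}(\mathbf r)$ of controlled weighted $q$-degree and known leading position, then to extract the bound on $\ell_2$ from the PLM property applied to its expansion in the minimal basis $B$.

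First I would invoke Theorem \ref{thm:main}. Pick a codeword $\mathbf c \in C$ with $d_R(\mathbf c, \mathbf r) = t$ and corresponding message $m(x) \in \Lp_{<k}$. Theorem \ref{thm:main} gives an element $f = [D(m(x)) \quad -D(x)] \in \mathfrak{M}(\mathbf r)$ with $\qdeg(D(x)) = t$ and $\qdeg(D(m(x))) \leq t + k - 1$. Hence the $(0,k-1)$-weighted $q$-degree of $f$ is at most $t + k - 1$. A short check against the definition of the $(0,k-1)$-weighted term-over-position order (where $x^{[t+k-1]}e_1 < x^{[t]}e_2$ by the tie-breaking rule on positions) yields $\lpos(f) = 2$.

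Next I would expand $f$ in the minimal basis: since $B = \{b^{(1)}, b^{(2)}\}$ generates $\mathfrak{M}(\mathbf r)$, there exist $\beta(x), \gamma(x) \in \Lp$ with $f = \beta(x) \circ b^{(1)} + \gamma(x) \circ b^{(2)}$. By the PLM property (Theorem \ref{thm:PLM}),
\[
\lm(f) = \max\bigl\{\lm(\beta(x)) \circ \lm(b^{(1)}),\ \lm(\gamma(x)) \circ \lm(b^{(2)})\bigr\}.
\]
Because $\lpos(b^{(1)}) = 1$ and $\lpos(b^{(2)}) = 2$ and composition preserves the leading position, the maximum on the right has leading position $1$ unless $\gamma(x) \neq 0$. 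Since $\lpos(f) = 2$, we must have $\gamma(x) \neq 0$, and in fact the maximum is achieved by the $b^{(2)}$-contribution. Comparing weighted $q$-degrees then gives $\qdeg(\gamma(x)) + \ell_2 \leq t + k - 1$, so in particular $\ell_2 \leq t + k - 1$, which is the first claim. Since $\lpos(b^{(2)}) = 2$, this is equivalent to $\qdeg(b_2^{(2)}) \leq t$.

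Finally, for the bound on $\ell_1$, I would simply combine the inequality just obtained with Lemma \ref{lem:l1andl2}, which asserts $\ell_1 + \ell_2 = n + k - 1$. This yields $\ell_1 \geq (n + k - 1) - (t + k - 1) = n - t$. Since $\lpos(b^{(1)}) = 1$, $\qdeg(b_1^{(1)}) = \ell_1$, completing the proof. The main technical point to nail down is the middle step, namely the argument that $\gamma(x) \neq 0$: this is where the choice of the $(0,k-1)$-weighted order and the PLM property combine to translate a soft inequality on weighted $q$-degree into the sharp bound on $\ell_2$. Once this is set, the rest of the lemma is an immediate consequence of Lemma \ref{lem:l1andl2}.
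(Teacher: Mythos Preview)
Your proposal is correct and follows essentially the same approach as the paper's proof: construct from a closest codeword an element $f \in \mathfrak{M}(\mathbf r)$ with $\lpos(f)=2$ and $(0,k-1)$-weighted $q$-degree $t+k-1$ via Theorem~\ref{thm:main}, apply the PLM property to deduce $\lm(f) \geq \lm(b^{(2)})$ and hence $\ell_2 \leq t+k-1$, then invoke Lemma~\ref{lem:l1andl2} for the bound on $\ell_1$. Your argument is slightly more explicit than the paper's in justifying $\gamma(x)\neq 0$ and the leading position of $f$, but the substance is identical.
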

\begin{proof}
Let $m(x) \in \Lp$ be the message polynomial corresponding to the codeword $\bf c$. Then by Theorem \ref{thm:main}, there exist $D(x)\in\Lp$ of $q$-degree $t$ such that $f:=[\: D(m(x)) \;\; D(x) \:] $ is an element of the interpolation module with leading position $2$. By the PLM property from Theorem \ref{thm:PLM} we know that $\lm (f) = \lm (a(x) \circ b^{(2)})$ for some $a(x) \in \Lp$, i.e.\ $\lm (f) \geq \lm ( b^{(2)})$, which implies the first statement since the leading positions of both elements are $2$.
We know from Lemma \ref{lem:l1andl2} that $\ell_1 + \ell_2 = n+k-1$, i.e.\
\[\ell_1 = n+k-1-\ell_2 \geq n-t.\]
\end{proof}

\begin{lem}
In the previous setting, if $t\leq (n-k)/2$, then $\ell_2 = t+k-1$, or equivalently $\qdeg(b_2^{(2)}) = t$, and $\ell_1 = n-t$.
\end{lem}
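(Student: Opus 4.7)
The plan is a proof by contradiction. By Lemma~\ref{lem:l1} we already have $\ell_2 \leq t+k-1$; once we establish the reverse inequality $\ell_2 \geq t+k-1$, the equality $\ell_1 = n-t$ follows immediately from Lemma~\ref{lem:l1andl2} via $\ell_1 = n+k-1-\ell_2$. So suppose, toward a contradiction, that $\bar t := \qdeg(b_2^{(2)}) < t$, i.e., $\ell_2 = \bar t + k -1 < t+k-1$.

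Let $\mathbf{c}$ be a codeword realising the minimum distance $t$, with message polynomial $m(x)$ of $q$-degree less than $k$ and associated error-span polynomial $D(x)$ of $q$-degree exactly $t$. By Theorem~\ref{thm:main}, the element $f := [\,D(m(x)) \;\; -D(x)\,]$ lies in $\mathfrak{M}(\mathbf r)$, has $(0,k-1)$-weighted $q$-degree $t+k-1$, and leading position $2$ (on a tie in weighted degree, position $2$ wins). Writing $b^{(1)} = [P(x) \;\; -K(x)]$ and $b^{(2)} = [N'(x) \;\; -D'(x)]$, I would express $f = \beta(x)\circ b^{(1)} + \gamma(x)\circ b^{(2)}$ for some $\beta,\gamma \in \Lp$, and then apply the PLM property (Theorem~\ref{thm:PLM}). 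Since $\lpos(f)=2$, the leading term of $f$ must come from $\gamma\circ b^{(2)}$, so
\[
\qdeg(\gamma)+\ell_2 = t+k-1 \quad \text{and} \quad \qdeg(\beta)+\ell_1 \leq t+k-1.
\]

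The crucial step---and the one I expect to be the main technical obstacle---is showing that the hypothesis $t\leq (n-k)/2$ forces $\beta = 0$. Using Lemma~\ref{lem:l1andl2} to substitute $\ell_1 = n + k - 1 - \ell_2 = n - \bar t$, the inequality above becomes $\qdeg(\beta) \leq t + \bar t - n$. Combined with $\bar t \leq t-1$ and $2t \leq n-k$, this yields $\qdeg(\beta) \leq 2t -1 -n \leq -k-1 < 0$, so $\beta = 0$.

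With $\beta=0$, we have $D(x) = \gamma(D'(x))$ and $D(m(x)) = \gamma(N'(x))$. Substituting the first into the second gives $\gamma(D'(m(x))) = \gamma(N'(x))$, i.e., $\gamma\circ\bigl(D'\circ m - N'\bigr) = 0$. Since $\qdeg(\gamma) = t - \bar t > 0$, $\gamma$ is nonzero; as $\Lp$ has no zero divisors under composition, left-cancellation gives $N'(x) = D'(m(x))$. Hence $b^{(2)} = [\,D'(m(x)) \;\; -D'(x)\,]$ satisfies all three conditions of Theorem~\ref{thm:main} with $\qdeg(D') = \bar t$, corresponding to a codeword at rank distance $\bar t < t$ from $\mathbf r$, contradicting the minimality of $t$. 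This proves $\ell_2 = t+k-1$, and $\ell_1 = n-t$ follows as noted.
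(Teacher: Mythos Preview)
Your proof is correct and follows essentially the same line as the paper's: both show, via the PLM property and the hypothesis $2t\leq n-k$, that the element $f$ corresponding to a closest codeword must be a pure multiple $\gamma(x)\circ b^{(2)}$ (i.e.\ $\beta=0$), then propagate the divisibility condition down to $b^{(2)}$ itself and invoke minimality of $t$; the paper packages this directly rather than by contradiction, but the mechanism is identical. One minor arithmetic slip: substituting $\ell_1=n-\bar t$ into $\qdeg(\beta)+\ell_1\leq t+k-1$ gives $\qdeg(\beta)\leq t+\bar t+k-1-n$, not $t+\bar t-n$; however, with $\bar t\leq t-1$ and $2t\leq n-k$ this is still $\leq -2<0$, so your conclusion $\beta=0$ stands.
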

\begin{proof}
We know from Lemma \ref{lem:l1} that 
\[ \ell_1 \geq n-t \geq  \frac{n+k}{2}  .\]
Since the vector corresponding to the closest codeword has $(0,k-1)$-weighted $q$-degree $t+k-1< \ell_1$, this vector is $a(x)\circ  b^{(2)}$ for some $a(x) \in \Lp$. But if the divisibility requirement is fulfilled for $a(x)\circ  b^{(2)}$, then it must also be fulfilled for $ b^{(2)}$. Hence  $ b^{(2)}$ must correspond to the closest codeword and is thus of weighted $q$-degree $t+k-1$ (by Theorem \ref{thm:main}). Note that the last step could also be justified by using the fact that we are within the unique decoding radius.
From Lemma \ref{lem:l1andl2} we then get that $\ell_1 = n+k-1-\ell_2 =n-t$.
\end{proof}

\begin{cor}\label{cor:poly}
It follows that, if the received word is within the unique decoding radius, Algorithm \ref{alg1} only performs one loop and hence only one symbolic division to find the message polynomial corresponding to the unique closest codeword.
\end{cor}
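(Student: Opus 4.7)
The plan is to show that under the hypothesis $t \leq (n-k)/2$, the outermost \textbf{while} loop of Algorithm \ref{alg1} terminates after the single iteration $j=0$, and that the inner double loop over $(\beta(x), \gamma(x))$ contains only one admissible pair, thereby triggering exactly one symbolic divisibility test.

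First I would invoke the preceding lemma: within the unique decoding radius $t \leq (n-k)/2$ we have $\ell_1 = n-t$ and $\ell_2 = t+k-1$. A direct subtraction gives
\[
\ell_2 - \ell_1 = (t+k-1) - (n-t) = 2t + k - 1 - n \leq -1,
\]
where the final inequality uses $2t \leq n-k$. This is the key numerical input.

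Next I would examine the inner loops at $j=0$. The outer loop ranges over $\beta(x) \in \Lp$ with $\qdeg(\beta(x)) \leq \ell_2 - \ell_1 + 0 < 0$, which forces $\beta(x) = 0$. The inner loop ranges over monic $\gamma(x) \in \Lp$ with $\qdeg(\gamma(x)) = 0$; the only such polynomial is $\gamma(x) = x$. Hence the only candidate produced is
\[
f = 0 \circ b^{(1)} + x \circ b^{(2)} = b^{(2)}.
\]
Only a single symbolic division is performed, namely the check whether $b_1^{(2)}(x)$ is symbolically left-divisible by $b_2^{(2)}(x)$.

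Finally I would argue that this single test succeeds and yields the unique closest message polynomial. By the preceding lemma, $b^{(2)}$ itself has $(0,k-1)$-weighted $q$-degree $\ell_2 = t+k-1$, matching the weighted $q$-degree of the vector associated with the (unique) closest codeword, and that lemma's proof shows $b^{(2)}$ indeed corresponds to this closest codeword; therefore the divisibility requirement of Theorem \ref{thm:main} is met. Consequently \texttt{list} becomes nonempty after the $j=0$ pass, the \textbf{while} condition \texttt{list}$=[\,]$ fails, and the algorithm terminates with one symbolic division. The main (and only) subtlety is justifying that the output is the unique closest codeword, which is immediate from uniqueness of decoding within the half-minimum-distance radius combined with the correspondence in Theorem \ref{thm:main}.
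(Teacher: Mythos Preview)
Your proof is correct and is precisely the argument the paper leaves implicit: the corollary is stated without proof in the paper, relying on the preceding lemma's conclusion that $\ell_1=n-t$, $\ell_2=t+k-1$, and that $b^{(2)}$ itself corresponds to the unique closest codeword. Your computation $\ell_2-\ell_1\leq -1$ forcing $\beta(x)=0$ and $\gamma(x)=x$ at $j=0$, together with the divisibility of $b^{(2)}$, is exactly the intended justification.
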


\subsection{Complexity Analysis}

We will now analyze the computational complexity of the previous algorithms. For this we assume that our elements of $\F_{q^m}$ are stored as elements of $\F_q^m$ with regard to a normal basis $\{\beta, \beta^{[1]}, \beta^{[2]}, \dots , \beta^{[m-1]}\}$ of $\F_{q^m}$ over $\F_q$. Such a normal basis is most suitable, since taking $q$-th powers are represented by cyclic shifts and can hence be neglected in the complexity order analysis (see e.g.\ \cite[Section 3.1]{wa13phd}).

We will start with the easier task of analyzing Algorithms \ref{alg2} and \ref{alg3}, before we 
 derive the overall decoding complexity, which is the complexity of Algorithm \ref{alg1}, in terms of the rank distance of the closest codeword to the received word. 

Throughout this subsection we will use the notation of the previous subsection, i.e.\ $b^{(1)}, b^{(2)}\in \Lp$ form a minimal basis of the interpolation module $\mathfrak{M} (\bf r )$ and $\ell_i$ is the $(0,k-1)$-weighted $q$-degree of $b^{(i)}$ for $i=1,2$. Moreover, $\lpos(b^{(i)})=i$ for $i=1,2$. $\Pi_{\bf g}(x)$ is the annihilator polynomial for the generators of the code $g_1,\dots,g_n\in \F_{q^m}$ and $\Lambda_{\bf r, \bf g}(x)$ is the $q$-Lagrange polynomial with respect to the received vector $\mathbf{r} \in \F_{q^m}^n$.

\begin{lem}
$\Pi_{\bf g}(x)$ can be computed with at most $\mathcal{O}_{q^m}(n^2)$ operations. 
\end{lem}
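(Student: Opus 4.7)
The plan is to apply the iterative construction of the $q$-annihilator polynomial as given in Proposition \ref{prop:Lagrec}. Recall that this construction starts with $\Pi_1(x)=x^q-g_1^{q-1}x$ and updates
$$\Pi_{i+1}(x)=\Pi_i(x)^q-\Pi_i(g_{i+1})^{q-1}\,\Pi_i(x) \quad \text{ for } i=1,\dots,n-1.$$
So the computation consists of $n-1$ iterations, and the task reduces to bounding the number of operations in $\F_{q^m}$ per iteration and summing.

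For the $i$-th update I would count the cost of the elementary subtasks, exploiting the normal basis assumption that $q$-th powers are free:
\emph{(i)} evaluating $\Pi_i(g_{i+1})$, a $q$-polynomial of $q$-degree $i$, costs $\mathcal{O}_{q^m}(i)$ operations, since each of the $i+1$ terms $a_j g_{i+1}^{[j]}$ requires only one multiplication in $\F_{q^m}$ ($q$-th powers being free), together with $i$ additions;
\emph{(ii)} forming the scalar $\Pi_i(g_{i+1})^{q-1}$ is essentially free (a single division plus a free $q$-th power);
\emph{(iii)} computing $\Pi_i(x)^q$ is free, since it amounts to applying the Frobenius coefficientwise and shifting the exponents by one;
\emph{(iv)} multiplying the scalar $\Pi_i(g_{i+1})^{q-1}$ into the polynomial $\Pi_i(x)$ of $q$-degree $i$ costs $\mathcal{O}_{q^m}(i)$ operations, and the subtraction of the two $q$-polynomials contributes another $\mathcal{O}_{q^m}(i)$.

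Summing the per-iteration cost yields
$$\sum_{i=1}^{n-1}\mathcal{O}_{q^m}(i)=\mathcal{O}_{q^m}(n^2),$$
which establishes the claimed bound. I expect the only non-routine aspect of the argument to be making transparent which operations are genuinely free under the normal basis representation, as otherwise a naive count of $q$-th powers would inflate the estimate; once that convention is in place, the iterative scheme of Proposition \ref{prop:Lagrec} delivers the complexity directly.
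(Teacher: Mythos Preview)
Your proposal is correct and follows essentially the same approach as the paper: both use the iterative construction of Proposition~\ref{prop:Lagrec} and bound each update by $\mathcal{O}_{q^m}(i)$ operations, summing to $\mathcal{O}_{q^m}(n^2)$. If anything, your bookkeeping is slightly more careful---you explicitly separate out the cost of evaluating $\Pi_i(g_{i+1})$, which the paper's proof glosses over.
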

\begin{proof}
Consider the iterative construction from Proposition \ref{prop:Lagrec}. 
At step $i$ we need to compute $\Pi_i(x) = (x^q - g_i^{q-1}x)\circ \Pi_{i-1}(x)$. We compute a $(q-1)$-th power of $g_i$, which can be done with a $q$-th power and one division. Moreover, we need to multiply $\Pi_{i-1}(x)$ with this power, which needs at most $i$ operations since $\qdeg(\Pi_{i-1}(x))=i-1$. Similarly we need to take the $q$-th power of all terms of $\Pi_{i-1}(x)$. The last step is to take the difference of the two resulting polynomials (where one has $q$-degree $i$ and the other $i+1$), hence at most $i+1$ operations. Since $i$ is upper bounded by $n$, we get an upper bound of $\mathcal{O}_{q^m}(n)$ operations at each step. Since there are $n$ steps, the overall complexity is upper bounded by $\mathcal{O}_{q^m}(n^2 )$.
\end{proof}

\begin{rem}
Since $\Pi_{\bf g}(x)$ does not depend on the received word, we can precompute and store it. In the following we assume that we precomputed $\Pi_{<g_1,\dots,g_i>}(x)$ for $i=1,\dots,n$ since we will need all of them for the computation of the $q$-Lagrange polynomial.
\end{rem}

\begin{lem}
$\Lambda_{\bf r, \bf g}(x)$ can be computed with at most $\mathcal{O}_{q^m}(n^2)$ operations.
\end{lem}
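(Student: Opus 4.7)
The plan is to mimic the complexity argument used for $\Pi_{\bf g}(x)$, but now based on the iterative construction of the $q$-Lagrange polynomial given in Proposition~\ref{prop:Lagrec}. Recall that at step $i$ the recurrence gives
\[
\Lambda_{i+1}(x) \;=\; \Lambda_i(x) \;-\; \frac{\Lambda_i(g_{i+1})-r_{i+1}}{\Pi_i(g_{i+1})}\,\Pi_i(x),
\]
so each step requires: (a) evaluating $\Lambda_i(x)$ and $\Pi_i(x)$ at the single point $g_{i+1}\in\F_{q^m}$; (b) one subtraction, one inversion, and one product in $\F_{q^m}$ to form the scalar coefficient; (c) one scalar multiplication of the polynomial $\Pi_i(x)$; and (d) one polynomial addition.

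First I would argue that each evaluation of a $q$-linearized polynomial of $q$-degree at most $i$ costs $\mathcal{O}_{q^m}(i)$: in the normal basis representation, each $q$-th power is just a cyclic shift and is thus free, so evaluating $\sum_{j=0}^{i}a_j x^{[j]}$ at a fixed point amounts to $i+1$ multiplications and $i$ additions in $\F_{q^m}$. Next I would observe that the scalar coefficient in the recurrence is computed in $\mathcal{O}_{q^m}(1)$ operations, and that multiplying the precomputed polynomial $\Pi_i(x)$ (of $q$-degree $i$) by this scalar and adding the result to $\Lambda_i(x)$ also costs $\mathcal{O}_{q^m}(i)$.

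Summing over $i=1,\dots,n-1$ then yields total cost $\sum_{i=1}^{n-1}\mathcal{O}_{q^m}(i)=\mathcal{O}_{q^m}(n^2)$. The base case $\Lambda_1(x) = (r_1/g_1)x$ costs only $\mathcal{O}_{q^m}(1)$, which does not affect the bound. Combined with the previous remark that the polynomials $\Pi_i(x)$ for $i=1,\dots,n$ have been precomputed and stored, this establishes the desired bound.

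The only mildly delicate point is bookkeeping: one must make sure that reusing the stored $\Pi_i(x)$ is legitimate (it is, since these polynomials depend only on $\bf g$, not on $\bf r$) and that evaluations in the normal basis representation truly contribute only $\mathcal{O}_{q^m}(i)$ per step. Apart from this, the argument is a routine translation of the corresponding analysis for $\Pi_{\bf g}(x)$.
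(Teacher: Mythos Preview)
Your proposal is correct and follows essentially the same approach as the paper: both use the iterative construction from Proposition~\ref{prop:Lagrec}, bound the cost of each step by $\mathcal{O}_{q^m}(i)$ (two polynomial evaluations, a constant number of scalar operations, one scalar--polynomial multiplication, one polynomial subtraction), and sum over the $n$ steps to obtain $\mathcal{O}_{q^m}(n^2)$. Your write-up is in fact slightly more detailed than the paper's, explicitly invoking the normal-basis representation and the precomputation of the $\Pi_i(x)$.
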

\begin{proof}
Consider again the iterative construction from Proposition \ref{prop:Lagrec}. 
At step $i$ we need to compute $\Lambda_i(x) = \Lambda_{i-1}(x) - \frac{\Lambda_{i-1}(g_i)-r_i}{\Pi_{i-1}(g_i)} \Pi_{i-1}(x)$. We need to compute two evaluations of polynomials of $q$-degree at most $i$, whose complexity is at most $\mathcal{O}_{q^m}(i)$, and a negligible division and difference. Moreover, we need to take the difference of the two polynomials, which is in the order of at most $\mathcal{O}_{q^m}(i)$ (because of the degrees). The $q$-degrees are at most $n$, hence for each step we need at most $\mathcal{O}_{q^m}(n)$ operations. Since we have $n$ steps, we get the desired complexity order.
\end{proof}
Note that computing $\Pi_{\bf g}(x)$ and $\Lambda_{\bf r, \bf g}(x)$ ad hoc is much more expensive than using the iterative definition, which is why we used the method from Proposition \ref{prop:Lagrec}.


\begin{prop}
Algorithm \ref{alg2} has computational complexity order $\mathcal{O}_{q^m}(n^2)$.
\end{prop}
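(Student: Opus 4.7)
The plan is to bound the per-iteration cost of the \textbf{while}-loop in terms of the $q$-degrees of the working polynomials and then telescope the sum. I assume, as stated at the start of this subsection, that $\F_{q^m}$ is represented in a normal basis, so that $q$-th powers (and hence compositions on the left by a monomial $x^{[i]}$) are free cyclic shifts.

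Set $a_j := \qdeg(P_j)$, $b_j := \qdeg(N_j)$, and $d_j := \qdeg(D_j)$. The update rule of Algorithm~\ref{alg2} gives $a_{j+1} = b_j$, $b_{j+1} < b_j$, $K_{j+1} = D_j$, and $D_{j+1} = K_j - q_j \circ D_j$. Starting from $a_0 = n$ and $d_0 = 0$, a short induction on $j$ establishes the degree-ledger invariant $a_j + d_j = n$; in particular $b_j, d_j \leq n$ throughout, and the loop runs at most $n$ times.

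The right symbolic division $P_j = q_j \circ N_j + r_j$ proceeds by $a_j - b_j + 1$ leading-coefficient cancellations, each of which subtracts from the running remainder a scalar times $x^{[i]} \circ N_j$ and costs $\mathcal{O}_{q^m}(b_j + 1)$ operations. The subsequent composition $q_j \circ D_j = \sum_i c_i \cdot (x^{[i]} \circ D_j)$ is assembled by the same number of scalar-times-Frobenius-shifted-$D_j$ additions, each costing $\mathcal{O}_{q^m}(d_j + 1)$; the final subtraction of $K_j$ (of $q$-degree $d_{j-1} \leq d_j$) is absorbed. Hence iteration $j$ costs $\mathcal{O}_{q^m}\bigl((a_j - b_j + 1)(b_j + d_j + 1)\bigr) = \mathcal{O}_{q^m}\bigl((a_j - b_j + 1)\,n\bigr)$.

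The differences telescope: $\sum_j (a_j - b_j) = \sum_j (a_j - a_{j+1}) = a_0 - a_{\mathrm{final}} \leq n$, and combined with at most $n$ iterations this gives $\sum_j (a_j - b_j + 1) = \mathcal{O}(n)$, so the total cost is $\mathcal{O}_{q^m}(n^2)$. The main obstacle is the degree-ledger invariant $a_j + d_j = n$: without it the $D_j$-column could in principle blow up and wreck the telescoping of the composition work. Once that is in place, the rest is routine bookkeeping.
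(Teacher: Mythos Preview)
Your argument is correct. The paper's own proof is considerably shorter: it simply observes that Algorithm~\ref{alg2} is an instance of the linearized extended Euclidean algorithm and invokes a known result (citing \cite{ga08p}) that the EEA has complexity order given by the square of the larger input $q$-degree, which here is $\qdeg(\Pi_{\mathbf g}) = n$.

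Your route is a self-contained unpacking of that citation. The degree-ledger invariant $a_j + d_j = n$ is exactly the standard EEA degree identity (equivalently, the degree of the ``determinant'' of the running unimodular transformation stays constant), and together with the telescoping $\sum_j(a_j - a_{j+1}) \leq n$ it pins down the quadratic bound directly. What you gain is transparency: one sees precisely that the total quotient length is $\mathcal{O}(n)$ and that the per-coefficient work on both the $N$-column and the $D$-column is $\mathcal{O}(n)$, the latter being the place where the invariant does real work. What the paper's proof gains is brevity, at the cost of deferring the actual analysis to an external reference.
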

\begin{proof}
Once we have the basis of the interpolation module, i.e.\ after computing $\Lambda_{\bf r, \bf g}(x)$, the computation of the minimal basis consists of a linearized extended Euclidean algorithm (EEA). The complexity order of the EEA is given by the square of the larger $q$-degree of the two linearized input polynomials (see \cite[Section IV.A]{ga08p}). Hence in our case the order is upper bounded by $\mathcal{O}_{q^m}(n^2)$. 
\end{proof}


\begin{prop}
Algorithm \ref{alg3} has computational complexity order $\mathcal{O}_{q^m}(n^2)$.
\end{prop}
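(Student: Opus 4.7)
The plan is to bound the cost of a single iteration of Algorithm \ref{alg3} by $\mathcal{O}_{q^m}(n)$, and since the main \textbf{for} loop runs exactly $n$ times, the total complexity comes out as $\mathcal{O}_{q^m}(n^2)$. The two ingredients I need are a $q$-degree bound on the entries of the running matrix $B_i$, and a cost estimate for the two operations performed in each iteration (the evaluations yielding $\Gamma_i, \Delta_i$ and the left composition of $B_{i-1}$ with a $2\times 2$ matrix of $q$-degree at most one).

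First I would observe that the entries of $B_i$ have $q$-degree at most $i$. This follows by induction: $B_0$ has entries of $q$-degree at most $0$, and inspection of either of the two update matrices shows that each entry of $B_i$ is obtained as either $(x^q-c\,x)\circ f(x)=f(x)^q-c\,f(x)$ or as an $\F_{q^m}$-linear combination of two entries of $B_{i-1}$ after a single composition with $x$; in both cases the $q$-degree increases by at most one. In particular, throughout the algorithm all polynomial entries have $q$-degree bounded by $n$, and at step $i$ by $i$.

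Next I would cost step $i$. The values $\Gamma_i$ and $\Delta_i$ come from four evaluations of linearized polynomials of $q$-degree at most $i$ at elements of $\F_{q^m}$. With the normal basis representation, the $q$-th power operation reduces to a cyclic shift, so each such evaluation amounts to computing a linear combination of at most $i+1$ already-available quantities in $\F_{q^m}$, costing $\mathcal{O}_{q^m}(i)$ operations. The matrix update $B_i := M\circ B_{i-1}$ requires, for each of the four entries of $B_i$, one composition of a polynomial of $q$-degree at most one with an entry of $B_{i-1}$ of $q$-degree at most $i-1$, followed by an $\F_{q^m}$-linear combination of two such polynomials. Composition with $x^q - c\,x$ reduces (via the normal basis) to taking a cyclic shift of coefficients plus a scalar multiplication and subtraction, which is again $\mathcal{O}_{q^m}(i)$; composition with $c\,x$ is even cheaper. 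Hence the entire step $i$ costs $\mathcal{O}_{q^m}(i)$.

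Summing over $i=1,\dots,n$ yields total cost $\mathcal{O}_{q^m}\bigl(\sum_{i=1}^n i\bigr)=\mathcal{O}_{q^m}(n^2)$, as claimed. The only subtle point, and therefore what I would present most carefully, is justifying that the matrix composition at step $i$ really only needs $\mathcal{O}_{q^m}(i)$ base-field operations; this is where the normal basis assumption on $\F_{q^m}$ over $\F_q$ plays its role, by replacing each $q$-th power by a cyclic shift of coefficients and so avoiding the extra factor of $m$ that a naive implementation would incur.
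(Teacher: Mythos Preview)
Your proof is correct and follows essentially the same approach as the paper's: bound the cost of each of the $n$ iterations by a linear amount and conclude $\mathcal{O}_{q^m}(n^2)$ overall. Your version is slightly more careful in tracking the $q$-degree bound as $i$ at step $i$ (rather than the paper's cruder bound of $n$ throughout), and you sum $\sum_i i$ instead of using $n\cdot\mathcal{O}_{q^m}(n)$, but this is a refinement of the same argument rather than a different route.
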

\begin{proof}
For the iterative computation of the minimal basis from Algorithm \ref{alg3} we need $n$ steps. In each step we need some polynomial evaluations and differences to compute $\Delta_i$ and $\Gamma_i$, which needs $\mathcal{O}_{q^m}(n)$ operations (similarly to before). Moreover, we need to multiply a linearized polynomial of $q$-degree at most $n$ by a scalar, which also needs $\mathcal{O}_{q^m}(n)$ operations. Similarly, the last step is the composition with $(x^q-g_i^{q-1} x)$, which is analogous to one step in the computation of $\Pi_{\bf g}(x)$ and is hence in the order of $\mathcal{O}_{q^m}(n)$. Overall we get an upper bound on the complexity of $\mathcal{O}_{q^m}(n^2)$.
\end{proof}

Note that the computational complexity orders of the EEA Algorithm \ref{alg2} and the iterative method Algorithm \ref{alg3} are the same. However, it is shown in \cite{ga08p} that the actual number of operations used by the iterative method is less than the number of operations used by the EEA. 


We now determine the computational complexity order of the overall decoding Algorithm \ref{alg1} which uses the parametrization. The next theorem shows that the computational complexity order of Algorithm \ref{alg1} is exponential if and only if $t$ is greater than the unique decoding radius, i.e.\ if and only if $t>(n-k)/2$.

\begin{thm}\label{overallcompl}
Let $t$ be the rank distance between the received word $\bf r$ and the closest codeword $\mathbf{c} \in C$. 
The complexity order of Algorithm \ref{alg1}, using Algorithm \ref{alg2} or Algorithm \ref{alg3} for the computation of the minimal basis, is upper bounded by
\[\mathcal{O}_{q^m}(  q^{m(2t+k-n)}(t+k)^2 + n^2   )\]
Furthermore, for $t\geq n-k$, this complexity order is at most
\[\mathcal{O}_{q^m}((q^{m(2t+k-n)} +1) n^2   )  .\]
\end{thm}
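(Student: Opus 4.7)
My plan is to treat the preprocessing (steps~1--3) separately from the main while loop (steps~4 onwards) of Algorithm~\ref{alg1}. The preceding propositions already show that computing $\Pi_{\mathbf g}(x)$, $\Lambda_{\mathbf g,\mathbf r}(x)$ and a minimal basis via either Algorithm~\ref{alg2} or Algorithm~\ref{alg3} costs $\mathcal{O}_{q^m}(n^2)$, so it remains to bound the cost of the while loop in terms of $t$.

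From the correctness proof of Algorithm~\ref{alg1} the loop terminates exactly at $j_\star = t-\ell_2+k-1$, which is non-negative by Lemma~\ref{lem:l1}. For each $j\in\{0,1,\dots,j_\star\}$ the inner \textbf{for} loops sweep over at most $q^{m(\ell_2-\ell_1+j+1)}$ choices of $\beta(x)$ and $q^{mj}$ monic choices of $\gamma(x)$, producing at most $q^{m(\ell_2-\ell_1+2j+1)}$ pairs. Since this is a geometric progression in $j$ with ratio $q^{2m}\geq 4$, the sum over $j=0,\dots,j_\star$ is at most a constant times the final term. Substituting $j_\star = t-\ell_2+k-1$ and invoking $\ell_1+\ell_2 = n+k-1$ from Lemma~\ref{lem:l1andl2}, the exponent collapses to $m(2t+k-n)$, so the total pair count is $\mathcal{O}(q^{m(2t+k-n)})$.

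For the per-pair cost I must build $f = \beta(x)\circ b^{(1)} + \gamma(x)\circ b^{(2)}$ via four compositions and two additions, and then perform one symbolic left-division to test whether $f_1(x)$ is divisible by $f_2(x)$. As recalled in the analysis of Algorithm~\ref{alg3}, composition of linearized polynomials of $q$-degrees $d_1,d_2$ costs $\mathcal{O}_{q^m}(d_1 d_2)$ operations in the normal-basis representation, and symbolic division with $d_1\geq d_2$ costs $\mathcal{O}_{q^m}((d_1-d_2)d_2)$. Every polynomial appearing in step~(i) or step~(ii) has $q$-degree bounded by $t+k$, so the whole operation costs $\mathcal{O}_{q^m}((t+k)^2)$ per pair.

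Multiplying the pair count by the per-pair cost yields $\mathcal{O}_{q^m}(q^{m(2t+k-n)}(t+k)^2)$ for the while loop, and adding the preprocessing term gives the first bound. For the second bound I use that $t\leq n$ and $k\leq n$ imply $(t+k)^2\leq 4n^2$, while the hypothesis $t\geq n-k$ guarantees $q^{m(2t+k-n)}\geq 1$, so the additive $n^2$ term is absorbed into $(q^{m(2t+k-n)}+1)n^2$. The main obstacle will be the careful geometric-sum bookkeeping in the second paragraph, together with verifying that the possibly unbalanced $q$-degrees of $(\beta,b^{(1)})$ and $(\gamma,b^{(2)})$ still stay within the $(t+k)^2$ budget for every composition; once those are settled, the rest is direct arithmetic.
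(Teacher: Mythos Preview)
Your argument is correct and matches the paper's approach: count the $(\beta,\gamma)$ pairs via $\ell_1+\ell_2=n+k-1$ (the paper simply takes the final value of $j$ rather than summing the geometric series explicitly), bound the per-pair work by $\mathcal{O}_{q^m}((t+k)^2)$, and add the $\mathcal{O}_{q^m}(n^2)$ preprocessing. Regarding your flagged concern about unbalanced composition degrees: whenever $\ell_1>t+k-1$ one has $\ell_2-\ell_1+j_\star<0$, forcing $\beta=0$, so the composition with $b^{(1)}$ is vacuous; otherwise $\qdeg(b_1^{(1)})=\ell_1\leq t+k-1$ and your stated degree bound holds directly.
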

\begin{proof}
The complexity is dominated by the number of different $\beta(x), \gamma(x)$ we need to consider in Algorithm \ref{alg1}. We know that $j$ runs from $0$ to $t-\ell_2+k-1$ to find the solutions at distance $t$. For the largest value of $j$ the parametrization considers all $\beta(x)\in \Lp$ with $\qdeg(\beta(x))\leq t-\ell_1 +k-1$ and all monic $\gamma(x) \in \Lp$ with $\qdeg(\gamma(x))\leq t-\ell_2 +k-1$. Hence we get 
$$q^{m( t-\ell_1 +k)} q^{m( t-\ell_2 +k-1)} = q^{m( 2t-(\ell_1+ \ell_2) +2k-1)} = q^{m( 2t +k -n)}   $$ 
possible pairs (where we used Lemma \ref{lem:l1andl2} for the last equality). For each such pair $\beta(x),\gamma(x)$ the symbolic division algorithm for linearized polynomials is executed, which has a complexity order of square of the larger $q$-degree of the two polynomials (see e.g.\ \cite{ko08,wa13phd}). The degrees of the respective polynomials in the division algorithm are at most $t+k$, hence the symbolic division has order at most $\mathcal{O}_{q^m}((t+k)^2)$. Now also taking into account the complexity order of Algorithm \ref{alg2} or Algorithm \ref{alg3}, which is performed only once at the beginning of the algorithm, we conclude that the first statement holds. In fact, the degrees of the two polynomials are also upper bounded by $n$, so that the second statement follows.

\end{proof}

Note that it was already shown in  Corollary \ref{cor:poly} that only one loop with one symbolic division needs to be executed if $ t \leq (n-k)/2$, hence the complexity order in this case is given by $\mathcal{O}_{q^m}(n^2)$ (which is the complexity order of Algorithm \ref{alg2} as well as Algorithm \ref{alg3}).

\vspace{0.3cm}

In the following we compare the complexity order of Algorithm \ref{alg1} to the complexity order of  a chase list decoding algorithm. The chase list decoding algorithm is a list decoding method based on the following fact: If we receive ${\bf r}={\bf c}+{\bf e} \in \F_{q^m}^n$ with $\rk({\bf e})=t > (d-1)/2$, then there exists a ${\bf e'} \in \F_{q^m}^n$ of rank $t-(d-1)/2$ such that $\rk({\bf e-e'})=(d-1)/2$. The chase is to try all possible ${\bf e'}$ and use a unique decoder on
$\bf r-e'$. This will find all codewords whose rank distance to the received word ${\bf r}$ equals $t$ or less. 

For a complexity analysis of this chase algorithm we need to count how many of these $\bf e'$ there are. For this we count all matrices in
$\F_q^{m\times n}$ of rank $t-(d-1)/2$. By rank decomposition over $\F_q$ an upper bound on the number of these matrices is
\[q^{m(t-\frac{d-1}{2})} q^{(t-\frac{d-1}{2})n} = q^{(t-\frac{d-1}{2})(n+m)} \leq q^{2m(t-\frac{d-1}{2})}=q^{m(2t-n+k)}.\]
Since the unique decoder can be done in $O_{q^m}(n^2)$ (see e.g.\ \cite{ga85a,lo06}), we get an overall complexity order 
of $ O_{q^m}(q^{m(2t-n+k)} n^2)$ for the chase algorithm. 
Hence, when decoding beyond the unique decoding radius, if $t < n-k$, then Theorem \ref{overallcompl} shows that the complexity order of our list decoding algorithm is less than the complexity order of the chase algorithm. However, for $t \geq n-k$ our list decoding algorithm has the same complexity order as the chase algorithm.

Note that both our algorithm and the chase algorithm perform better than an exhaustive search list decoding algorithm (i.e.\ computing the distance between the received word and every codeword), as long as the decoding radius $t$ is less than $n/2$.

\section{Conclusions}\label{sec:conclusion}

In this paper we used a parametrization approach for decoding Gabidulin codes with respect to the rank metric. Our main result is that we use this algorithm to compute a list of message polynomials that correspond to {\em all} codewords that are closest to a given received word. Thus we do not prescribe a fixed decoding radius as in a complete list decoder. However, our result can straightforwardly be extended to a complete list decoder with prescribed radius $t^*$ by simply applying the search through the parametrization with increasing $t$ until $t=t^*$. 

We summarized some results on modules over the ring of linearized polynomials and emphasized the Predictable Leading Monomial (PLM) property for minimal bases of these modules as a key property in the context of an interpolation module for a given Gabidulin code and a received word. The decoding algorithm and the parametrization were set within this interpolation module, using the PLM property as a key ingredient.
 
To compute a minimal basis of the interpolation module we presented two algorithms -- the first coincides with the extended Euclidean algorithm for linearized polynomials; the second is an iterative algorithm with simple update steps. These algorithms are similar to other known algorithms.
In fact, our extended Euclidean algorithm is similar to the Gao-type algorithm by Wachter-Zeh \cite{wa13phd}, whilst our iterative algorithm coincides with the Welch-Berlekamp type algorithm given by Loidreau \cite{lo06}. The algorithms of \cite{lo06,wa13phd} decode only within the unique decoding radius. The first main contribution of our paper is the recognition that these algorithms 
actually compute a minimal basis of the interpolation module which can then be used to  decode beyond the unique decoding radius. We then showed how to do this, as our second main contribution. 

Finally we gave a complexity analysis, showing that our algorithm has polynomial computational complexity if and only if the received word is within the unique decoding radius. Beyond the unique decoding radius, the complexity of our algorithm is better than the complexity of a chase list decoding algorithm, if the decoding radius is less than $n-k$. If the decoding radius is greater or equal to $n-k$ then our algorithm's complexity order is on par with the complexity order of the chase algorithm. 

In future work we intend to use the parametrization introduced in this paper to tackle the open question of the existence of polynomial size lists for Gabidulin codes for decoding radii between the unique decoding and the Johnson radius. In particular, we aim to investigate whether there are parameter sets for which the list size \emph{is} polynomial; and if so, derive a polynomial time list decoding algorithm for Gabidulin codes.

\section{Acknowledgment}
We would like to thank the anonymous reviewers for helpful comments and for providing additional references.
\bibliographystyle{plain}
\bibliographystyle{spmpsci}
\bibliography{margreta_anna-lena}

\begin{thebibliography}{10}

\bibitem{ab02}
S.~Abramov and M.~Bronstein.
\newblock Linear algebra for skew-polynomial matrices.
\newblock Technical Report INRIA RR-4420, March 2002.

\bibitem{ad94b}
W.~W. Adams and P.~Loustaunau.
\newblock {\em An introduction to {G}r\"obner bases}, volume~3 of {\em Graduate
  Studies in Mathematics}.
\newblock American Mathematical Society, Providence, RI, 1994.

\bibitem{al11}
M.~Ali and M.~Kuijper.
\newblock A parametric approach to list decoding of {R}eed-{S}olomon codes
  using interpolation.
\newblock {\em IEEE Transactions on Information Theory}, 57(10):6718--6728,
  2011.

\bibitem{be10}
B.~Beckermann, H.~Cheng, and G.~Labahn.
\newblock Fraction-free row reduction of matrices of {O}re polynomials.
\newblock {\em J. Symbolic Comput.}, 41(5):513--543, 2006.

\bibitem{co05b}
D.~A. Cox, J.~Little, and D.~O'Shea.
\newblock {\em Using algebraic geometry}, volume 185 of {\em Graduate Texts in
  Mathematics}.
\newblock Springer, New York, second edition, 2005.

\bibitem{de78}
P.~Delsarte.
\newblock Bilinear forms over a finite field, with applications to coding
  theory.
\newblock {\em Journal of Combinatorial Theory, Series A}, 25(3):226--241,
  1978.

\bibitem{fi95}
P.~Fitzpatrick.
\newblock On the key equation.
\newblock {\em IEEE Transactions on Information Theory}, 41:1290--1302, 1995.

\bibitem{fo75}
G.~D. {Forney, Jr.}
\newblock Minimal bases of rational vector spaces, with applications to
  multivariable linear systems.
\newblock {\em SIAM J. Control}, 13:493--520, 1975.

\bibitem{ga85a}
E.~M. Gabidulin.
\newblock Theory of codes with maximum rank distance.
\newblock {\em Problemy Peredachi Informatsii}, 21(1):3--16, 1985.

\bibitem{ga92}
E.~M. Gabidulin.
\newblock A fast matrix decoding algorithm for rank-error-correcting codes.
\newblock In {\em Algebraic coding ({P}aris, 1991)}, volume 573 of {\em Lecture
  Notes in Computer Science}, pages 126--133. Springer, Berlin, 1992.

\bibitem{ga08p}
M.~Gadouleau and Zhiyuan Yan.
\newblock Complexity of decoding {G}abidulin codes.
\newblock In {\em Information Sciences and Systems, 2008. CISS 2008. 42nd
  Annual Conference on}, pages 1081--1085, March 2008.

\bibitem{gu13}
V.~Guruswami and C.~Wang.
\newblock Explicit rank-metric codes list-decodable with optimal redundancy.
\newblock {\em arXiv:1311.7084 [cs.IT]}, 2013.

\bibitem{gu12a}
V.~Guruswami and C.~Xing.
\newblock List decoding {R}eed-{S}olomon, algebraic-geometric, and {G}abidulin
  subcodes up to the {S}ingleton bound.
\newblock {\em Electronic Colloquium on Computational Complexity (ECCC)},
  19:146, 2012.

\bibitem{ka90}
A.~Kandri-Rody and V.~Weispfenning.
\newblock Noncommutative {G}r\"obner bases in algebras of solvable type.
\newblock {\em J. Symbolic Comput.}, 9(1):1--26, 1990.

\bibitem{ko08}
R.~K{\"o}tter and F.~R. Kschischang.
\newblock Coding for errors and erasures in random network coding.
\newblock {\em IEEE Transactions on Information Theory}, 54(8):3579--3591,
  2008.

\bibitem{kuijppol_it}
M.~Kuijper and J.W. Polderman.
\newblock {R}eed-{S}olomon list decoding from a system theoretic perspective.
\newblock {\em IEEE Trans. Inf. Th.}, IT-50:259--271, 2004.

\bibitem{ku11}
M.~Kuijper and K.~Schindelar.
\newblock Minimal {G}r{\"o}bner bases and the predictable leading monomial
  property.
\newblock {\em Linear Algebra and its Applications}, 434(1):104--116, 2011.

\bibitem{ku14arxiv}
M.~Kuijper and A.-L. Trautmann.
\newblock {G}r{\"o}bner bases for linearized polynomials.
\newblock In {\em arXiv:1406.4600 [cs.SC]}.

\bibitem{ku14p}
M.~Kuijper and A.-L. Trautmann.
\newblock Iterative list-decoding of {G}abidulin codes via {G}r{\"o}bner based
  interpolation.
\newblock In {\em IEEE Information Theory Workshop (ITW)}, pages 581--585,
  Hobart, Australia, 2014.

\bibitem{ku14}
M.~Kuijper and A.-L. Trautmann.
\newblock List decoding {G}abidulin codes via interpolation and the {E}uclidean
  algorithm.
\newblock In {\em International Symposium on Information Theory and its
  Applications (ISITA)}, pages 343--347, Melbourne, Australia, 2014.

\bibitem{li97b}
R.~Lidl and H.~Niederreiter.
\newblock {\em Finite Fields}.
\newblock Cambridge University Press, Cambridge, London.
\newblock Second edition.

\bibitem{lo06p}
P.~Loidreau.
\newblock Decoding rank errors beyond the error correcting capability.
\newblock In {\em International Workshop on Algebraic and Combinatorial Coding
  Theory (ACCT)}, pages 186--190, Sept. 2006.

\bibitem{lo06}
P.~Loidreau.
\newblock A {W}elch-{B}erlekamp like algorithm for decoding {G}abidulin codes.
\newblock In {\em Coding and cryptography}, volume 3969 of {\em Lecture Notes
  in Computer Science}, pages 36--45. Springer, Berlin, 2006.

\bibitem{lu03}
P.~Lusina, E.~Gabidulin, and M.~Bossert.
\newblock Maximum rank distance codes as space-time codes.
\newblock {\em Information Theory, IEEE Transactions on}, 49(10):2757--2760,
  Oct 2003.

\bibitem{ma12}
H.~Mahdavifar and A.~Vardy.
\newblock List-decoding of subspace codes and rank-metric codes up to
  {S}ingleton bound.
\newblock In {\em Information Theory Proceedings (ISIT), 2012 IEEE
  International Symposium on}, pages 1488--1492, 2012.

\bibitem{or33}
O.~Ore.
\newblock {On a Special Class of Polynomials}.
\newblock {\em Transactions of the American Mathematical Society}, 35:559--584,
  1933.

\bibitem{ri04p}
G.~Richter and S.~Plass.
\newblock Fast decoding of rank-codes with rank errors and column erasures.
\newblock In {\em IEEE International Symposium on Information Theory (ISIT)},
  pages 398--398, 2004.

\bibitem{ro91}
R.~M. Roth.
\newblock Maximum-rank array codes and their application to crisscross error
  correction.
\newblock {\em IEEE Transactions on Information Theory}, 37(2):328 --336, mar
  1991.

\bibitem{si10}
V.~Sidorenko and M.~Bossert.
\newblock Decoding interleaved gabidulin codes and multisequence linearized
  shift-register synthesis.
\newblock In {\em IEEE International Symposium on Information Theory (ISIT)},
  pages 1148--1152, June 2010.

\bibitem{si10p}
V.~Sidorenko and M.~Bossert.
\newblock Synthesizing all linearized shift-registers of the minimal or
  required length.
\newblock In {\em 2010 International ITG Conference on Source and Channel
  Coding (SCC)}, pages 1--6, Jan 2010.

\bibitem{si11}
V.~Sidorenko, L.~Jiang, and M.~Bossert.
\newblock Skew-feedback shift-register synthesis and decoding interleaved
  {G}abidulin codes.
\newblock {\em IEEE Transactions on Information Theory}, 57(2):621--632, 2011.

\bibitem{si12}
N.~Silberstein, A.~S. Rawat, and S.~Vishwanath.
\newblock Error resilience in distributed storage via rank-metric codes.
\newblock In {\em Fiftieth Annual Allerton conference, UIUC}, pages 1150--1157,
  Illinois, USA, oct 2012.

\bibitem{si09p}
D.~Silva and F.~R. Kschischang.
\newblock Fast encoding and decoding of gabidulin codes.
\newblock In {\em IEEE International Symposium on Information Theory (ISIT)},
  pages 2858--2862, June 2009.

\bibitem{si09}
D.~Silva and F.~R. Kschischang.
\newblock On metrics for error correction in network coding.
\newblock {\em IEEE Transactions on Information Theory}, 55(12):5479--5490,
  dec. 2009.

\bibitem{si08j}
D.~Silva, F.~R. Kschischang, and R.~K\"otter.
\newblock A rank-metric approach to error control in random network coding.
\newblock {\em IEEE Transactions on Information Theory}, 54(9):3951 --3967,
  2008.

\bibitem{tr13p}
A.-L. Trautmann, N.~Silberstein, and J.~Rosenthal.
\newblock List decoding of lifted {G}abidulin codes via the {P}l\"ucker
  embedding.
\newblock In {\em Preproceedings of the International Workshop on Coding and
  Cryptography (WCC) 2013}, pages 539--549, Bergen, Norway, 2013.

\bibitem{wa13a}
A.~Wachter-Zeh.
\newblock Bounds on list decoding of rank-metric codes.
\newblock {\em IEEE Transactions on Information Theory}, 59(11):7268--7277,
  2013.

\bibitem{wa13phd}
A.~Wachter-Zeh.
\newblock {\em Decoding of Block and Convolutional Codes in Rank Metric}.
\newblock PhD thesis, Ulm University, Germany, 2013.

\bibitem{wa13d}
A.~Wachter-Zeh, V.~Afanassiev, and V.~Sidorenko.
\newblock Fast decoding of {G}abidulin codes.
\newblock {\em Des. Codes Cryptogr.}, 66(1-3):57--73, 2013.

\bibitem{wa13p}
A.~Wachter-Zeh and A.~Zeh.
\newblock Interpolation-based decoding of interleaved {G}abidulin codes.
\newblock In {\em Preproceedings of the International Workshop on Coding and
  Cryptography (WCC) 2013}, pages 527--537, Bergen, Norway, 2013.

\bibitem{xi11}
H.~Xie, Z.~Yan, and B.W. Suter.
\newblock General linearized polynomial interpolation and its applications.
\newblock In {\em International Symposium on Network Coding (NetCod)}, pages
  1--4, July 2011.

\end{thebibliography}

\end{document}